\Crefname{algocf}{Algorithm}{Algorithms}
\crefname{algocfline}{line}{lines}
\Crefname{invariant}{Invariant}{Invariants}
\definecolor{DarkGray}{rgb}{0.66, 0.66, 0.66}
\definecolor{DarkPowderBlue}{rgb}{0.0, 0.2, 0.6}
\definecolor{fluorescentyellow}{rgb}{0.8, 1.0, 0.0}
\newcounter{note}[section]
\renewcommand{\thenote}{\thesection.\arabic{note}}
\newcommand{\alert}[1]{{\color{red}#1}}
\newcommand{\agnote}[1]{\refstepcounter{note}$\ll${\bf Anupam~\thenote:}
  {\sf \color{blue} #1}$\gg$\marginpar{\tiny\bf AG~\thenote}}
\newcommand{\elnote}[1]{\refstepcounter{note}$\ll${\bf E~\thenote:}
  {\sf \color{gray} #1}$\gg$\marginpar{\tiny\bf EL~\thenote}}
\newcommand{\dpnote}[1]{\refstepcounter{note}$\ll${\bf Debmalya~\thenote:}
  {\sf \color{magenta} #1}$\gg$\marginpar{\tiny\bf DP~\thenote}}  
\newcommand{\alert}[1]{}
\newcommand{\agnote}[1]{}
\newcommand{\elnote}[1]{}
\newcommand{\dpnote}[1]{}
\newcommand{\initOneLiners}{%
    \setlength{\itemsep}{0pt}
    \setlength{\parsep }{0pt}
    \setlength{\topsep }{0pt}
}
  \def\\{}%
  \def\texttt#1{<#1>}%
  \def\textsf#1{<#1>}%
  \def\mathsf#1{<#1>}%
  \def\ensuremath#1{#1}%
  \def\xspace{}%
  \def\Cref#1{<Label:#1>}%
  \def\eqref#1{<Eq.:#1>}%
\newtheorem{theorem}{Theorem}[section]
\newtheorem{lemma}[theorem]{Lemma}
\newtheorem{defn}[theorem]{Definition}
\newcommand{\ip}[1]{\langle #1 \rangle}
\newcommand{\eat}[1]{}
\newcommand{\eps}{\varepsilon}
\newcommand{\EE}{\mathbb{E}}
\newcommand{\R}{\mathbb{R}}
\newcommand{\E}{\mathbb{E}}
\newcommand{\N}{\mathbb{N}}
\newcommand{\cS}{\mathcal{S}}
\newcommand{\sign}{\operatorname{sign}}
\newcommand{\poly}{\operatorname{poly}}
\newcommand{\nf}{\nicefrac}
\newcommand{\MC}{\textsc{MaxCut}\xspace}
\newcommand{\SC}{\textsc{SparsestCut}\xspace}
\newcommand{\alphaGW}{\alpha_{\text{GW}}}
\newcommand{\alphaRT}{\alpha_{\text{RT}}}
\newcommand{\opt}{{\rm opt}\xspace}
\newcommand{\sdp}{{\rm sdp}\xspace}
\newcommand{\cI}{\mathcal I}
\newcommand{\Abs}[1]{\left\lvert#1\right\rvert}
\newcommand{\set}[1]{\{#1\}}
\newcommand{\val}{\textnormal{val}}
\newcommand{\Paren}[1]{\left(#1\right)}
\newcommand{\brac}[1]{[#1]}
\newcommand{\Brac}[1]{\left[#1\right]}
\DeclareMathOperator*{\Var}{\textnormal{var}}
\newcommand{\bbP}{\mathbb P}
\newcommand{\cE}{\mathcal E}
\title{Max-Cut with $\eps$-Accurate Predictions}
\author{ Vincent Cohen-Addad\thanks{Google Research France.}  \and Tommaso  d'Orsi\thanks{Bocconi University. Part of this work was done while the author was at ETH Z\"urich.} \and Anupam Gupta\thanks{New York University. Supported in part by NSF awards CCF-2006953 and CCF-2224718, and by Google, Inc. Part of this work was done while the author was at Carnegie Mellon University. }  \and Euiwoong Lee\thanks{University of Michigan. Supported in part by NSF award CCF-2236669 and by Google, Inc.}  \and Debmalya Panigrahi\thanks{Duke University.}} 
\date{}
\begin{document}


\maketitle

\begin{abstract}
  We study the approximability of the \textsc{MaxCut} problem in the
  presence of predictions. 
  Specifically, we consider two models:
  in the \emph{noisy predictions} model, for each vertex we are given
  its correct label in $\{-1,+1\}$ with some unknown
  probability $\nf12 + \eps$, and the other (incorrect) label
  otherwise. In the more-informative \emph{partial predictions} model, for each vertex we are given its correct label with probability $\eps$ and no label otherwise. We assume only pairwise independence between vertices in both models. 

  We show how these predictions can be used to improve on the
  worst-case approximation ratios for this problem. Specifically, we
  give an algorithm that achieves an $\alpha + \widetilde{\Omega}(\eps^4)$-approximation
  for the noisy predictions model, where $\alpha \approx 0.878$ is the
  \textsc{MaxCut} threshold. While this result also holds for the
  partial predictions model, we can also give a $\beta +
  \Omega(\eps)$-approximation, where $\beta \approx 0.858$ is the approximation ratio
  for \textsc{MaxBisection} given by Raghavendra and Tan. This answers a question posed
  by Ola Svensson in his plenary session talk at SODA'23.
\end{abstract}

\thispagestyle{empty}

\newpage
\setcounter{page}{1}

\maketitle

\section{Introduction}
\label{sec:introduction}

The study of graph cuts has played a central role in the evolution and success of the field of algorithm design since its early days. In particular, cut problems have served as a testbed for models and techniques in ``beyond worst case'' algorithm design~\cite{boppana1987eigenvalues, R2020}, as researchers have striven to bridge the gap between observed real-world performance of algorithms and their theoretical analyses based on worst-case instances. Over the years, this has given rise to several deep lines of research, such as the study of (semi-)random instances of cut problems (e.g.,~\cite{pmlr-v35-mossel14,newman2006modularity,DBLP:conf/soda/CarsonI01,abbe2015exact,abbe2017community,cohen2020power}) and the exploration of optimal cuts that are stable to random noise (introduced by Bilu and Linial~\cite{bilu2012stable}, see also~\cite{makarychev2014bilu}). In recent years, the abundance of data and the impact of machine learning has led to algorithmic models that seek to go beyond worst-case performance using a {\em noisy prediction} of an optimal solution, typically generated by a machine learning model or a human expert or even by crowdsourcing. This paradigm has been particularly successful at overcoming information-theoretic barriers in online algorithms (see the CACM article by Mitzenmacher and Vassilvitskii~\cite{MitzenmacherV22}) 
but is also a natural alternative for transcending computational barriers. 
Motivated by this vision, in his SODA '23 plenary lecture, Ola Svensson posed the following question: {\em In the \MC problem, suppose we are given a prediction for the optimal cut that is independently correct for every vertex with probability $\nf 12 +\eps$. 
Can we exploit this information to breach the \MC threshold of $\alphaGW \simeq 0.878$ and obtain an $(\alphaGW+f(\eps))$-approximate solution?}

In this paper, we give an affirmative answer to this question {\em for all} $\eps > 0$. Namely, we give an algorithm that for any $\eps > 0$, obtains an $(\alphaGW+\tilde{\Omega}(\eps^4))$-approximate \MC solution. Furthermore, we relax the independence requirement to just pairwise independence of the predictions on the vertices. We further complement this result by considering another natural prediction model where instead of a noisy prediction for every vertex, we get a correct prediction but only for an $\eps$-fraction of randomly chosen vertices. In this case, we obtain an $(\alphaRT+\Omega(\eps))$-approximate solution to \MC, where $\alphaRT\simeq 0.858$ is the approximation factor obtained by Raghavendra and Tan for the \textsc{MaxBisection} problem~\cite{RT12}. Note that $\alphaRT$ is slightly smaller than $\alphaGW$, but we get a better advantage of $\Omega(\eps)$ instead of $\Omega(\eps^4)$. 

{\bf The \MC Problem.} We start by describing the \MC problem. In this problem, we are given a weighted graph
$G = (V,E)$ represented by a (symmetric) $n\times n$ adjacency matrix
$A$, where $A_{ij} = w_{ij}$, the weight of edge $\{i, j\}$ if it exists,
and $0$ otherwise. (We assume the graph has no
self-loops, and hence $A$ has zeroes on the diagonal.) We use $D$ to
denote the diagonal matrix $D_{ii} = \sum_{j\in [n]} A_{ij}$ and
$L = D-A$ to denote the (unnormalized) Laplacian matrix of the graph.
Note that $x \in \{-1,1\}^n$ denotes a cut in the graph, and the quadratic form
\[ \ip{x, Lx} = \sum_{\{i,j\} \in E} w_{ij} (x_i - x_j)^2 \] counts (four
times) the weight of edges crossing the cut between the vertices
labeled $1$, and those labeled $-1$. Hence, \MC can be rephrased as
follows:
\[  \text{\MC}(G) := \max_{x\in \{-1, 1\}^n} \nf14 \cdot \ip{x, Lx}. \]

\subsection{The Noisy/Partial Predictions Framework}

In this work, we initiate a study of cut problems under two different
models with predictions that may be incorrect or incomplete. In defining these models, by the ``correct solution'', we mean an (arbitrary) fixed  optimal solution for the \MC instance. 

\begin{enumerate}
\item The \emph{noisy predictions model}, where for each vertex we are
  given its correct label in $\{-1, +1\}$ with some
  unknown probability $\nf12 + \varepsilon$, and the other (incorrect)
  label otherwise. Here we only assume pairwise independence; for any two vertices $i$ and $j$, $\Pr[i, j\text{ both give their correct labels}] = (\nf12 + \eps)^2$.
  
\item The \emph{partial predictions model}, where for each vertex we are given its correct label in $\{-1, +1\}$ with some
  unknown probability $\varepsilon$, and no (blank) label otherwise. Again, we only assume pairwise independence;
  for any two vertices $i$ and $j$, $\Pr[i, j\text{ both give their}\\ \text{correct labels}] = \eps^2$.
\end{enumerate}

These two models are very basic and aim at capturing scenarios where
we are given very noisy predictions. 
These models can
easily be extended to other cut problems (\SC, \textsc{MinBisection},
\textsc{CorrelationClustering}, etc.), and more generally to other CSPs. In this paper, we
explore the power of such predictions to approximate \MC instances:
clearly we can just ignore the predictions and use 
Goemans-Williamson rounding~\cite{GoemansW95} to obtain an $\alphaGW \approx 0.878$-approximation for
any instance---but \emph{can we do better}?  

\subsection{Our Results and Techniques}

We show how to answer the above question positively in both prediction
models. In the more demanding model, we get noisy predictions. Here we
can just output the prediction to beat the random cut bound of $m/2$
by $O(\eps^2(\opt - m/2))$, but this is generally worse than
the SDP-based cut algorithm. The following theorem
shows how to outperform $\alphaGW$ by an
additive $\poly(\eps)$ factor:

\begin{restatable}[Noisy Predictions]{theorem}{NoisyThm}
  \label{thm:gen-degree}
  Given noisy predictions with a bias of $\eps$,
  there is a polynomial-time randomized algorithm that obtains 
  an approximation factor of $\alphaGW + \tilde{\Omega}(\eps^4)$
  in expectation for the \MC problem. 
\end{restatable}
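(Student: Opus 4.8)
The plan is to solve the standard \MC SDP to get unit vectors $\{u_i\}_{i\in V}$ with value $\sdp$ and pairwise angles $\theta_{ij}$, and to output whichever of two cuts is better: the Goemans--Williamson hyperplane rounding $x^{\mathrm{GW}}$, and a second cut $y$ produced by a \emph{prediction-guided} rounding. Recall $\opt\le\sdp$, $\opt\ge m/2$, and $\E[\val(x^{\mathrm{GW}})]=\sum_{\{i,j\}\in E}w_{ij}\theta_{ij}/\pi\ge\alphaGW\,\sdp$. Since the first candidate already pays $\alphaGW\,\sdp$, it suffices to show that in the one regime where plain GW fails to beat $(\alphaGW+\tilde\Omega(\eps^4))\opt$, the second candidate succeeds. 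To isolate that regime I would run a dichotomy using the function $\Psi(\theta):=\tfrac{\theta/\pi}{(1-\cos\theta)/2}$, which satisfies $\Psi\ge\alphaGW$ with equality only at the GW angle $\theta^\star\approx 2.331$ and, being smooth and strictly convex there, $\Psi(\theta)-\alphaGW\gtrsim\min\{(\theta-\theta^\star)^2,1\}$.

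If $\sdp\ge(1+c\eps^4)\opt$ then $x^{\mathrm{GW}}$ alone gives $\alphaGW\,\sdp\ge(\alphaGW+\Omega(\eps^4))\opt$. Otherwise $\opt\ge(1-O(\eps^4))\sdp$; and if in addition the computable surplus $\E[\val(x^{\mathrm{GW}})]-\alphaGW\,\sdp=\sum_{\{i,j\}\in E}w_{ij}\tfrac{1-\cos\theta_{ij}}{2}(\Psi(\theta_{ij})-\alphaGW)$ is at least $c\eps^4\sdp$, then $x^{\mathrm{GW}}$ again wins. So the only remaining case is the \emph{rigid, near-tight} one: $\opt=(1-O(\eps^4))\sdp$, and $\sum_{\{i,j\}\in E}w_{ij}(1-\cos\theta_{ij})\min\{(\theta_{ij}-\theta^\star)^2,1\}=O(\eps^4)\,\sdp$, i.e.\ essentially all the SDP objective is carried by edges with $\langle u_i,u_j\rangle\approx\cos\theta^\star$ (up to short edges $u_i\approx u_j$, which carry negligible SDP mass). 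Here $X^\star=x^\star(x^\star)^\top$ is an $O(\eps^4)$-approximately optimal SDP solution, while $\tfrac1n\langle g,x^\star\rangle^2\gtrsim\eps^2 n$ (mean $4\eps^2n$, variance $O(1)$, by pairwise independence and Chebyshev) --- so the prediction really must be used.

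The obvious ideas fail in this rigid case: interpolating the Gram matrix toward $gg^\top$, or shifting the GW thresholds by $t\,g_i$, each perturbs the expected cut value of a $\theta^\star$-edge by $\Theta(t^2)$ with a sign depending on whether the prediction is right there, but since $\cos\theta^\star<0$ the \emph{average} of these second-order effects is already $\Theta(-t^2)$ per edge, which swamps the $\Theta(\eps^2 t^2)$ gained from the prediction's bias. So the prediction must be used \emph{globally}. The approach I would take is to GW-round the optimizer $\hat X$ of a prediction-regularized SDP $\max\{\tfrac14\langle L,X\rangle+\mu\langle\tfrac1n gg^\top,X\rangle:X\succeq0,\ X_{ii}=1\}$ with $\mu=\Theta(\eps^2\bar d)$ small enough that the cut term is not destroyed; since $X^\star$ is near-optimal for the \MC SDP and $\E[\tfrac1n gg^\top]=4\eps^2 x^\star(x^\star)^\top+(1-4\eps^2)I$, the regularizer forces $\hat X$ into quantitative correlation with $x^\star$ (not merely with $g$), and one argues this correlation --- exploited in the rounding, e.g.\ by also shifting thresholds along the direction $\sum_i g_i v_i$ extracted from $\hat X$'s vectors $\{v_i\}$, rather than along $g$ itself --- converts into an $\Omega(\mu\eps^2)=\Omega(\eps^4)$ gain in the rounded value. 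A fallback, available when the weighted degrees are large, is to instead denoise $g$ spectrally: in the rigid regime $x^\star$ behaves like a bottom-eigenvector object of $A$, and the sign of a suitable projection of $Ag$ recovers it once $\eps^2\bar d\gg1$, pairwise independence again sufficing to bound $\|A\xi\|$ for the noise part $\xi=g-2\eps x^\star$.

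The main obstacle --- and, I expect, the source of both the $\eps^4$ exponent and the hidden logarithms --- is this last step: quantitatively turning the weak, only-pairwise-independent prediction, whose per-edge signal is a mere $\Theta(\eps^2)$, into a genuine $\Omega(\eps^4)$ gain in the \emph{rounded} cut simultaneously for all degree sequences. Low-degree (or low-weight) vertices are the bottleneck, since there the prediction cannot be locally denoised and one must combine the regularized SDP with a careful accounting of how $\opt-m/2$ and the weighted degrees are distributed --- plausibly guessing a scale parameter or union-bounding over $O(\log(1/\eps))$ regimes, which is where $\tilde\Omega$ rather than $\Omega$ would come from. The high-degree part, by contrast, should be comparatively clean, as neighborhood averaging already recovers $x^\star$ up to small error there.
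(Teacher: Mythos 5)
There is a genuine gap, and it sits exactly where you flag it: the step that turns the prediction-regularized SDP into an $\Omega(\eps^4)$ gain in the \emph{rounded} cut is asserted, not proved, and it is far from clear it can be proved along these lines. With only pairwise independence, the regularizer $\mu\langle \tfrac1n gg^\top, X\rangle=\tfrac{\mu}{n}g^\top X g$ is a quadratic form in the \emph{realized} prediction vector, whose noise component has norm $\Theta(\sqrt n)$ against a signal component of norm only $2\eps\sqrt n$; the maximizer $\hat X$ is therefore rewarded for aligning with $g$ itself, and your appeal to $\EE[\tfrac1n gg^\top]$ does not control the realized quadratic form uniformly over feasible $X$ (pairwise independence gives you Chebyshev on $\langle g,x^*\rangle$, but not the uniform concentration you would need to force correlation with $x^*$ rather than with the noise). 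Your own rigidity analysis shows why this is delicate: near the GW angle the second-order cost of any threshold/Gram perturbation is $\Theta(t^2)$ per edge while the prediction's per-edge signal is only $\Theta(\eps^2)$, so the ``global'' gain has to be extracted by an argument you do not supply. The spectral-denoising fallback is similarly only a sketch and already presupposes large weighted degrees.

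It is also worth noting that the paper resolves the problem by a decomposition that is essentially the mirror image of yours. It splits the graph into $\Delta$-wide and $\Delta$-narrow parts with $\Delta=\Theta(1/\eps^2)$. On the wide (high-degree-like) part the prediction \emph{is} used, but in an elementary way: $\hr=\tfrac{1}{2\eps}\tA Y$ estimates the optimal imbalances $Ax^*$ (Chebyshev plus pairwise independence suffices, since each wide vertex spreads its weight over $\ge\Delta$ edges), and an LP constrained by $\|\hr-Ax\|_1\le(\eps'+2\eta)W$ followed by independent rounding recovers a cut within additive $O((\eps'+\eta)W)$ of $\opt$ --- a near-PTAS in the AKK spirit, no regularized SDP needed. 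On the narrow (low-degree-like) part the prediction is \emph{not used at all}: the Feige--Karpinski--Langberg algorithm with the Hsieh--Kothari analysis already beats GW by $\tilde\Omega(1/d^2)$, which at $d=\Theta(\Delta)=\Theta(1/\eps^2)$ gives the $\tilde\Omega(\eps^4)$ advantage. So the low-degree regime, which you identify as the place where the prediction must somehow be leveraged globally (the unresolved heart of your proposal), is precisely the regime the paper handles prediction-free; conversely, your remark that the high-degree side is ``comparatively clean'' is the part of your plan that matches the paper, and fleshing out that side alone (estimate $(Ax^*)_i$ for vertices of weighted degree $\gtrsim 1/\eps^2$, then fall back to an FKL-type bound elsewhere) would close the gap without the regularized SDP machinery.
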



The intuition for this theorem comes from
considering graphs with high degree: when $x^* \in \{ -1, +1 \}^n$
denotes the optimal solution and a vertex $i$ has a degree at least
$\Omega(n)$, the celebrated PTAS of Arora, Karger, and
Karpinski~\cite{aroraKK99} starts by obtaining a good estimate of
$(Ax^*)_i$ by sampling $\widetilde{O}(1)$ uniformly random vertices
and exhaustively guessing their $x^*$ values. We observe that under
$\eps$-noisy predictions where $\eps n$ {\em bits of
  information} 
is given, the degree of a vertex $i$ only needs to be least $\Omega(1/\eps^2)$ to obtain a similarly good estimate on $(Ax^*)_i$ and eventually a PTAS. 
For low-degree graphs, we use the algorithm by Feige, Karpinski, and Langberg~\cite{FeigeKL02}, whose analysis was recently refined by Hsieh and Kothari~\cite{HsiehK22} to guarantee an $(\alphaGW+\widetilde{O}(\nf{1}{d^2}))$-approximation where $d$ is the maximum degree. We give details of \Cref{thm:gen-degree} in \S\ref{sec:AKK}.

The AKK result applies not just to \MC but to ``dense'' instances of all 2-CSPs. We show that this generalization extends to the noisy predictions model. In particular, for dense instances of 2-CSPs, a noisy prediction with bias $\eps$ can be leveraged to obtain a PTAS where the density threshold is a function of the prediction bias and the approximation error. This extension to general 2-CSPs is given in \S\ref{section:csps}.

Our next result is for the more-informative partial predictions model.
In this model, we show how to easily get an additive improvement of
$O(\eps^2)$ over $\alphaGW$, and again ask: can we do better? Our
next result gives an affirmative answer:

\begin{restatable}[Partial Predictions]{theorem}{PartialThm}
  \label{thm:partial-info-rt}
  Given partial predictions with a rate of $\eps$, there is a
  polynomial-time randomized algorithm that obtains an (expected) approximation
  factor of
  $\alphaRT + (1 - \alphaRT - o(1))(2\eps - \eps^2)$ for the \MC problem.
\end{restatable}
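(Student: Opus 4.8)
The plan is to reduce the partial-predictions MaxCut problem to MaxBisection, so that we can invoke the Raghavendra–Tan $\alphaRT$-approximation as a black box on a suitably modified instance, and then argue that the $\eps$-fraction of correctly revealed labels buys us the extra additive $(1-\alphaRT-o(1))(2\eps-\eps^2)$. Let $S \subseteq V$ be the set of vertices whose labels are revealed; by pairwise independence $\E[|S|] = \eps n$, and with high probability $|S|$ is concentrated (if one only has pairwise independence, use Chebyshev; a $(1\pm o(1))$ multiplicative deviation suffices for the argument, and one can condition on this good event losing only an $o(1)$ factor). Partition the revealed vertices into $S_+$ and $S_-$ according to their given label. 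The key structural observation is: an optimal cut $x^*$ of $G$ restricted to $V \setminus S$ must, together with the forced assignment $x^*|_S$ (which we know), still be near-optimal; more importantly, the total weight of $G$ that we can \emph{guarantee} to cut is at least $\alphaGW$ times the optimum on the graph induced by the unknown part, \emph{plus} all the edges incident to $S$ that we can handle with full knowledge on one endpoint.

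The cleaner route, and the one I expect the authors take, is the following gadget reduction. Build an auxiliary graph $G'$ on vertex set $V \cup \{r\}$ where $r$ is a new "anchor" vertex; contract all of $S_+$ into a single super-vertex identified with $r$ and all of $S_-$ into a single super-vertex $\bar r$, and add a heavy edge between $r$ and $\bar r$ of weight $W$ much larger than the total edge weight of $G$, forcing $r$ and $\bar r$ to opposite sides in any good cut. Now run the Raghavendra–Tan MaxBisection algorithm — or rather its MaxCut-with-cardinality / "MaxCut with one fixed vertex" variant — on $G'$. The point is that $\alphaRT$ is precisely the ratio Raghavendra and Tan achieve for MaxBisection, and the same SDP-rounding machinery yields an $\alphaRT$-approximation for the "global balance / fixed anchor" constrained problem. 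Fixing $r,\bar r$ to opposite sides exactly encodes the known labels, so the returned cut is consistent with the predictions on $S$. One then needs to check that $\mathrm{opt}(G') \ge \mathrm{opt}(G)$ (the optimal cut of $G$, which agrees with the predictions on $S$ since predictions are a fixed optimal solution, survives the contraction) and that a cut of $G'$ of value $V'$ gives a cut of $G$ of value $\ge V' - W$ (subtract the anchor edge).

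To get the \emph{additive} gain over $\alphaRT$, decompose the optimum: $\mathrm{opt}(G) = C_{\mathrm{in}} + C_{\mathrm{cross}} + C_{\mathrm{out}}$, where $C_{\mathrm{out}}$ is the weight of optimal-cut edges with both endpoints in $S$, $C_{\mathrm{cross}}$ those with exactly one endpoint in $S$, and $C_{\mathrm{in}}$ those with neither. Edges counted in $C_{\mathrm{out}}$ and $C_{\mathrm{cross}}$ are \emph{fully recovered} by the algorithm, since their $S$-endpoints are fixed correctly; only the $C_{\mathrm{in}}$ part is subject to the $\alphaRT$ loss, and even there the rounding on $V\setminus S$ conditioned on the fixed anchors still cuts an $\alphaRT$ fraction in expectation. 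So the expected algorithm value is at least $\alphaRT \cdot C_{\mathrm{in}} + C_{\mathrm{cross}} + C_{\mathrm{out}} = \alphaRT\cdot \mathrm{opt}(G) + (1-\alphaRT)(C_{\mathrm{cross}} + C_{\mathrm{out}})$. Finally, by pairwise independence each edge $\{i,j\}$ of the optimal cut lands in $C_{\mathrm{cross}}\cup C_{\mathrm{out}}$ with probability $\Pr[i\in S \text{ or } j \in S] = 2\eps - \eps^2$, so $\E[C_{\mathrm{cross}}+C_{\mathrm{out}}] = (2\eps-\eps^2)\,\mathrm{opt}(G)$; substituting gives the claimed $\alphaRT + (1-\alphaRT-o(1))(2\eps-\eps^2)$ bound in expectation (the $o(1)$ absorbing the concentration loss on $|S|$ and any cardinality-constraint slack in the Raghavendra–Tan rounding).

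The main obstacle is the middle step: cleanly reducing MaxCut-with-fixed-labels to a problem that the Raghavendra–Tan algorithm solves with ratio $\alphaRT$. MaxBisection as stated fixes the \emph{size} of the two sides, not individual labels, so one must verify that their SDP-rounding analysis extends to the setting with a constant number of pre-assigned "terminal" vertices (equivalently, global linear constraints on the solution), or route through the anchor-vertex gadget and confirm the heavy-edge trick does not degrade the approximation ratio below $\alphaRT$. I expect this to require re-examining the Raghavendra–Tan rounding rather than using it purely as a black box, since the folding of many revealed vertices into two anchors changes the instance; the rest of the argument (concentration of $|S|$ via Chebyshev, the $C_{\mathrm{in}}/C_{\mathrm{cross}}/C_{\mathrm{out}}$ decomposition, the pairwise-independence computation of $\Pr[\text{edge touched}]$) is routine.
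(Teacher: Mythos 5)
Your final accounting (each optimal-cut edge touches $S$ with probability $2\eps-\eps^2$, so the additive gain is $(1-\alphaRT)(2\eps-\eps^2)\opt$) matches the paper, but the mechanism in the middle has two genuine gaps. The central one is your claim that edges with exactly one endpoint in $S$ are ``fully recovered'' because their $S$-endpoint is fixed correctly. This is false as stated: the algorithm does not know the optimal label of the \emph{other} endpoint, so fixing $v_i=\pm v_0$ only guarantees (via the marginal-preserving property of the Raghavendra--Tan rounding) that such an edge is cut with probability exactly equal to its SDP contribution $\tfrac12(1-\ip{v_i,v_j})$ --- and nothing forces that contribution to be large, since the SDP optimum may achieve $\sdp\ge\opt$ while placing little mass on these edges. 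The paper closes exactly this hole: it guesses, up to a $1+o(1)$ factor (this is where the $o(1)$ in the theorem comes from, not from concentration of $|S|$), the weight $\tau$ of optimal-cut edges incident to $S$, and adds the constraint $\sum_{(i,j)\in E'} A_{ij}(1-\ip{v_i,v_j})/2 \ge \tau$ to the MaxCut SDP in which the revealed vertices' vectors are fixed to $\pm v_0$. The optimal integral cut remains feasible, so $\sdp\ge\opt$, and marginal preservation then converts the guaranteed SDP mass $\tau$ on $E'$ into expected cut weight $\tau$ with no $\alphaRT$ loss, yielding $\tau+\alphaRT(\sdp-\tau)$ and, after averaging over the predictions, the claimed bound. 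Without the guessed constraint your inequality ``expected value $\ge \alphaRT C_{\mathrm{in}} + C_{\mathrm{cross}} + C_{\mathrm{out}}$'' has no justification.

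The second gap is the reduction itself. Contracting $S_+,S_-$ into two anchors joined by an edge of weight $W$ far exceeding the total weight and running Raghavendra--Tan as a black box does not preserve the ratio: an $\alphaRT$-approximation on the gadget only guarantees value $\alphaRT(\opt(G)+W)$, so after subtracting the anchor edge you retain only $\alphaRT\,\opt(G)-(1-\alphaRT)W$, which is vacuous when $W$ dominates; moreover the bisection (cardinality) constraint is neither needed for, nor respected by, the contraction. The paper never reduces to \textsc{MaxBisection} at all: it keeps the plain \MC SDP with fixed vectors plus the guessed constraint, and borrows from Raghavendra--Tan only the \emph{rounding scheme}, whose two relevant features are the per-edge $\alphaRT$ guarantee and exact preservation of the marginals $\E[x_j]=\ip{v_0,v_j}$. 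Your Chebyshev step on $|S|$ is unnecessary; the whole argument runs directly in expectation.
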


Here $\alphaRT$ is the approximation factor for \textsc{MaxBisection}
given by Raghavendra and Tan~\cite{RT12}, and approximately $0.858$, a quantity
slightly smaller than $\alphaGW$. The advantage of their rounding
algorithm is that it preserves marginals of vertices; the probability that a vertex $i$ becomes $1$ is exactly the quantity predicted by the SDP solution.

\subsection{Related Work}


Since the early days, bringing together theory and practice has been
the focus of an intense research effort. The driving observation is
that the best theoretical algorithms are often designed to handle
worst-case instances that rarely occur in practice (and the successful
practical heuristics do not have provable guarantee in the
\emph{worst-case}). This observation has thus given birth to a line of
work on the so-called \emph{beyond worst-case} complexity of various
optimization problems~\cite{boppana1987eigenvalues}, and in particular cut
problems~\cite{R2020}. One iconic example is the \emph{Stochastic
  Block Model}, which provides a distribution over graphs that
exhibits a ``clear'' ground-truth cut, and the goal is to design
algorithms that on input drawn from the stochastic block model,
identify the ground-truth cut. This has given rise to a deep line of
work on the complexity of random instances for cut problems and has
allowed to show that some heuristics used in practice perform well on
such inputs~\cite{pmlr-v35-mossel14,newman2006modularity,DBLP:conf/soda/CarsonI01,abbe2015exact,abbe2017community,cohen2020power}.  This work has been generalized to random
Constraint Satisfaction Problems (CSP) instances, hence contributing
identifying and characterizing hard CSP instances (e.g.:~\cite{DBLP:conf/stoc/GuruswamiKM22}). Another
celebrated line of work is provided by the stability notion introduced
by Bilu and Linial~\cite{bilu2012stable} who consider \MC instances (and other
combinatorial optimization problems) where the optimum solution is
unique and remains optimal for any graph obtained by a
$\gamma$-perturbation of the edge weights. Makarychev, Makarychev, and
Vijayaraghavan~\cite{makarychev2014bilu} showed that it is possible to find an exact
optimal solution if $\gamma = \Omega( \sqrt{\log n}\log \log n)$ and
hard if $\gamma = o(\alpha_{SC})$, where $\alpha_{SC}$ denotes the
best polytime approximation ratio for the \SC problem.

More recently, the abundance of data and the impact of machine
learning has led us to design other models to capture new emerging
real-world scenarios and bridge the gap between theory and practice.
For clustering inputs, Ashtiani, Kushagra and
Ben-David~\cite{ashtiani2016clustering} introduced a model that
assumes that the algorithm can query an external oracle (representing,
for example, a machine learning model) that provides the optimum
cluster labels; the goal is then to limit the number of queries to the
oracle. This has led to several results leading to tight bounds for
various clustering objectives, from
$k$-means~\cite{DBLP:conf/innovations/AilonBJ018} to correlation
clustering~\cite{mazumdar2017clustering}. 
In some of these and
in follow-up works, e.g., Green Larsen et
al.~\cite{DBLP:conf/www/LarsenMT20}, and Del Pia et
al.~\cite{del2022clustering}, researchers have considered a more robust setting where the same-cluster oracle answers are noisy. Another set
of works is by Ergun et al.~\cite{ErgunFSWZ22}, and Gamlath et
al.~\cite{GamlathLNS22}, who consider $k$-means and related clustering
problems where the node labels are noisy. E.g., \cite{GamlathLNS22}
showed that even when the cluster labels provided by the oracles are
correct with a tiny probability (say $1\%$), it is possible to
obtain a $(1+o(1))$-approximation to the $k$-means objective as long
as the clusters are not too small. Noisy predictions similar to ours for online problems such as caching and online covering have also been considered (e.g.,~\cite{GPSS-neurips}). More generally, various online problems have been considered in the prediction model, and the dependence of algorithmic performance on a variety of noise parameters has been explored (see, e.g., Lykouris and Vassilvitskii~\cite{LykourisV21} who formalized this setting and the many papers listed at \cite{ALPS}).

A variant of the model considered in this paper has also been considered in an independent work by Bampis, Escoffier, and Xefteris~\cite{bampis2024parsimonious} where they focus
  on several cut problems (among them \MC) in dense graphs
  and provide an algorithm that approximates \MC within a 
  ratio $1-\eps - O(\text{e})$ factor if the number of edges  is $\Omega(n^2)$, where e is the 
  probability of a vertex to be mislabeled in a sample of
  size  of $O(\log n/\eps^3)$.




\subsection{Notation and Preliminaries}
\label{sec:prelims}

As mentioned in the introduction, the weighted undirected graph
$G = (V,E)$ is represented by an $n\times n$ weighted adjacency matrix
$A$, where $A_{ij} = w_{ij} \mathbbm{1}_{\{i,j\} \in E}$. The matrix
$A$ has zero diagonals. We let $W_i = \sum_j w_{ij}$ denote the
weighted degree of vertex $i$, let
$D := \text{diag}(W_1, \ldots, W_n)$ be the diagonal matrix with these
weighted degrees, and hence $L := D - A$ be the weighted
(unnormalized) Laplacian matrix of the graph.

In the \emph{noisy predictions} model, we assume there is some fixed
and unknown optimal solution $x^*\in \{-1, 1\}^n$. The algorithm has
access to a {\em prediction vector} $Y \in \{-1,1\}^n$, such that
prediction $Y_i$ is pairwise-independently {\em correct} with
some (unknown) probability $p$; namely, for each $i$ we have
\begin{gather}
 \Pr[Y_i = x^*_i] = p \qquad\text{and}\qquad\Pr[Y_i = -x^*_i] = 1-p \label{eq:pred-noisy},
\end{gather}
and for each $i \neq j$, 
\[
\Pr[Y_i = x^*_i \text{ and } Y_j = x^*_j] = p^2.
\]
We assume that the predictions have a strictly positive bias, i.e.,
$p > \nf 12$, and we denote the {\em bias} of the predictions by
$\eps := p - \nf 12$. For now we assume that the biases of all nodes
are exactly $\eps$: while this can be relaxed a bit, 
allowing arbitrarily different biases for different nodes may require
new ideas. (See the discussion at the end of the paper.) 
Also, we can assume we know $\eps$, since we can run our algorithm for
multiple values of $\eps$ chosen from a fine grid over the interval
$(0,\nf12)$ and choose the best cut among these runs. 

In the \emph{partial predictions} model with \emph{rate} $\eps$ we are given
a \emph{prediction vector} $Y \in \{-1, 0, 1\}^n$ where for each vertex $i \in V$, we have
\begin{gather}
  \Pr[Y_i = x^*_i] = \eps \qquad\text{and}\qquad\Pr[Y_i = 0] =
  1-\eps.\label{eq:pred-partial}
\end{gather} and  for each $i \neq j$, 
\[
\Pr[Y_i = x^*_i \text{ and } Y_j = x^*_j] = \eps^2.
\]


\newcommand{\hr}{\hat{r}}
\newcommand{\hx}{\hat{x}}
\newcommand{\hv}{\hat{v}}
\newcommand{\Vn}{V_{< \Delta}}
\newcommand{\Vw}{V_{> \Delta}}
\newcommand{\Wn}{W_{< \Delta}}
\newcommand{\Ww}{W_{> \Delta}}
\newcommand{\tA}{\tilde{A}}
\newcommand{\hA}{\hat{A}}

\section{\MC in the Noisy Prediction Model}
\label{sec:AKK}

Our goal in this section is to show the following theorem:

\NoisyThm*


A basic distinction that we will use throughout this section is that
of $\Delta$-wide and $\Delta$-narrow graphs; these should be thought
of as weighted analogs of high-degree and low-degree graphs. We first define these and
related concepts below, then we present an algorithm for the \MC problem on
$\Delta$-wide graphs in \S\ref{sec:high-degree}, followed by the
result for $\Delta$-narrow graphs in 
\S\ref{sec:low-degree}. We finally wrap up with the proof of
\Cref{thm:gen-degree}.

We partition the edges
incident to vertex $i$ into two sets: the {\em $\Delta$-prefix} for
$i$ comprises the $\Delta$ heaviest edges incident to $i$ (breaking
ties arbitrarily), while the remaining edges make up the {\em
  $\Delta$-suffix} for $i$. We fix a parameter $\eta \in
(0,\nf12)$. We will eventually set $\Delta = \Theta(1/\eps^2)$ and
$\eta$ to be an absolute constant.
Recall that $W_i = \sum_{j\in [n]} A_{ij}$ is the weighted degree of $i$.

\begin{defn}[$\Delta$-Narrow/Wide Vertex]
  A vertex $i$ is {\em $\Delta$-wide} if the total weight of edges in
  its $\Delta$-prefix is at most $\eta W_i$, and so the weight of
  edges in its $\Delta$-suffix is at least $(1-\eta) W_i$. Otherwise,
  the vertex $i$ is {\em $\Delta$-narrow}.
\end{defn}
Intuitively, a $\Delta$-wide vertex is one where most of its weighted degree
is preserved even if we ignore the $\Delta$ heaviest edges incident to the vertex.

We partition the vertices $V = [n]$ into the \emph{$\Delta$-wide} and
\emph{$\Delta$-narrow} sets; these are respectively denoted $\Vw$ and
$\Vn$.  We define $\Ww := \sum_{i\in \Vw} W_i$ and
$\Wn := \sum_{i\in \Vn} W_i$, and hence the sum of weighted degrees of
all vertices is $W := \sum_{i=1}^n W_i = \Ww + \Wn$.

\begin{defn}[$\Delta$-Narrow/Wide Graph]
  A graph is \emph{$\Delta$-wide} if the sum of weighted degrees of
  $\Delta$-wide vertices accounts for at least $1-\eta$ fraction of
  that of all vertices; i.e., if $\Ww \ge (1-\eta) W$. Otherwise, it is
  $\Delta$-narrow.
\end{defn}


\subsection{Solving \MC for $\Delta$-wide graphs}
\label{sec:high-degree}

Our main theorem for $\Delta$-wide graphs is the following.

\begin{theorem}\label{thm:high-degree}
  Fix $\eps' \in (0,1)$. Given noisy predictions with bias $\eps$,
  there is a polynomial-time randomized algorithm that, given any
  $\Delta$-wide graph, outputs a cut of value at least the maximum cut
  minus 
  $(5\eta + 2\eps') W$, 
  where
  $\Delta := O(1/(\eps \cdot \eps')^2)$, with probability
  $0.98$.

\end{theorem}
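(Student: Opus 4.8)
The plan is to mimic the Arora--Karger--Karpinski (AKK) PTAS for dense instances, but to replace its ``sample $\widetilde O(1)$ vertices and exhaustively guess their labels'' step by the cheaper ``read the noisy prediction vector $Y$''. Fix the optimal cut $x^\star\in\{-1,1\}^n$ to which the predictions refer, write $\mathrm{opt}=\mathrm{cut}_A(x^\star)$ where $\mathrm{cut}_A(x):=\tfrac14\sum_{ij}A_{ij}(x_i-x_j)^2$, and take $\Delta=\Theta(1/(\eps\eps')^2)$. \emph{Step 1 (discard prefixes).} Write $A=A^{\mathrm{pre}}+\tilde A$, where $\tilde A$ retains exactly the edges lying in the $\Delta$-suffix of \emph{both} endpoints. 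Since the $\Delta$-th heaviest edge at $i$ has weight at most $W_i/\Delta$, every entry of row $i$ of $\tilde A$ is at most $W_i/\Delta$. The total weight of discarded edges is at most $\sum_i(\text{prefix weight of }i)\le \eta\Ww+\Wn\le 2\eta W$, using $\Wn\le\eta W$ for a $\Delta$-wide graph, so $0\le \mathrm{cut}_A(x)-\mathrm{cut}_{\tilde A}(x)\le 5\eta W$ for every $x$. In particular $\mathrm{cut}_{\tilde A}(x^\star)\ge \mathrm{opt}-5\eta W$, and any $x$ with large $\mathrm{cut}_{\tilde A}$ has large $\mathrm{cut}_A$; it therefore suffices to output $x$ with $\mathrm{cut}_{\tilde A}(x)\ge\mathrm{cut}_{\tilde A}(x^\star)-2\eps' W$.

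\emph{Step 2 (estimate the local fields from $Y$).} Set $b:=\tilde A x^\star$ and $\hat b:=\tfrac1{2\eps}\,\tilde A Y$, which the algorithm computes (it knows $\eps$ and $\tilde A$). Since $\E[Y_j]=2\eps\,x^\star_j$, pairwise independence of the $Y_j$ gives $\E[\hat b_i]=b_i$ and $\Var[\hat b_i]=\tfrac1{4\eps^2}\sum_j\tilde A_{ij}^2\Var[Y_j]\le \tfrac1{4\eps^2}\big(\max_j\tilde A_{ij}\big)\tilde W_i\le \tfrac{W_i^2}{4\eps^2\Delta}$. Hence $\E\,\|\hat b-b\|_1\le\sum_i\sqrt{\Var[\hat b_i]}\le \tfrac{W}{2\eps\sqrt\Delta}\le \tfrac{\eps'}{100}W$ once $\Delta$ is a large enough multiple of $1/(\eps\eps')^2$; by Markov, with probability $\ge0.99$ we have $E_0:=\|\hat b-b\|_1\le\eps' W$, and we condition on this event. (Only pairwise independence was used, and --- crucially --- the per-vertex variances were \emph{summed}, not union-bounded: no individual sign estimate need be reliable, but the aggregate $\ell_1$ error is small.)

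\emph{Step 3 (linearize by an LP, then round).} Solve the polynomial-size LP
\[
\min\ \ \ip{y,\hat b}\qquad\text{s.t.}\qquad y\in[-1,1]^n,\qquad \|\tilde A y-\hat b\|_1\le \eps' W .
\]
On the conditioned event $y=x^\star$ is feasible, since $\|\tilde A x^\star-\hat b\|_1=\|b-\hat b\|_1=E_0\le\eps' W$. Let $\hat y$ be optimal, and write $\widetilde{\mathrm{cut}}(y):=\tfrac{\tilde W}{2}-\tfrac12\ip{y,\tilde A y}$, which is the exact expectation of $\mathrm{cut}_{\tilde A}$ under independent $\pm1$ rounding of $y$ and which equals $\mathrm{cut}_{\tilde A}$ at integral points. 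Using $\|\hat y\|_\infty\le1$, the $\ell_1$ constraint, optimality of $\hat y$ over the feasible region (which contains $x^\star$), and $\|\hat b-b\|_1=E_0$ twice,
\begin{align*}
\ip{\hat y,\tilde A\hat y}
&= \ip{\hat y,b}+\ip{\hat y,\,\tilde A\hat y-\tilde A x^\star}
\;\le\; \ip{\hat y,b}+\|\tilde A\hat y-b\|_1\\
&\le \big(\ip{\hat y,\hat b}+E_0\big)+\big(\|\tilde A\hat y-\hat b\|_1+E_0\big)
\;\le\; \ip{x^\star,\hat b}+2E_0+\eps' W\\
&\le \ip{x^\star,\tilde A x^\star}+3E_0+\eps' W
\;\le\; \ip{x^\star,\tilde A x^\star}+4\eps' W .
\end{align*}
Therefore $\widetilde{\mathrm{cut}}(\hat y)\ge \tfrac{\tilde W}{2}-\tfrac12\ip{x^\star,\tilde A x^\star}-2\eps' W=\mathrm{cut}_{\tilde A}(x^\star)-2\eps' W$. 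Finally round $\hat y$ to an integral $X$ by the method of conditional expectations, so $\mathrm{cut}_{\tilde A}(X)\ge\widetilde{\mathrm{cut}}(\hat y)$; by Step 1, $\mathrm{cut}_A(X)\ge\mathrm{cut}_{\tilde A}(X)\ge \mathrm{opt}-5\eta W-2\eps' W$. Since the only randomness is in $Y$, this holds with probability $\ge0.98$.

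\emph{Main obstacle.} The delicate point is the interaction of Steps 2 and 3. The naive idea --- impose per-vertex constraints $|(\tilde A y)_i-\hat b_i|\le\delta_i$ to keep $x^\star$ feasible --- breaks down: making every $\delta_i$ as small as $O(\eps' W_i)$ would require $|\hat b_i-b_i|\le\delta_i$ to hold simultaneously over all $n$ vertices, but Chebyshev controls each such event only with constant probability and the union bound is fatal. The resolution is to work only with the aggregate quantity $\|\hat b-b\|_1$ (small in expectation, hence by Markov) and to impose the single $\ell_1$-ball constraint $\|\tilde A y-\hat b\|_1\le\eps' W$: this is exactly what simultaneously (i) keeps $x^\star$ feasible on the good event and (ii) bounds the ``cross term'' $\ip{\hat y,\,\tilde A\hat y-\tilde A x^\star}$ by $\|\tilde A\hat y-b\|_1$. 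Everything else --- fixing the constant in $\Delta$, tracking the $\eta$-losses of Step 1, and derandomizing the final rounding --- is routine.
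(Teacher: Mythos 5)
Your proof is correct and takes essentially the same route as the paper: estimate the (pruned) neighborhood imbalances from $Y$ using only pairwise independence, control the aggregate $\ell_1$ error via a variance bound plus Markov with $\Delta=\Theta(1/(\eps\eps')^2)$, solve the LP with an $\ell_1$ trust-region constraint for which $x^\star$ is feasible, and round. The only differences are cosmetic bookkeeping: you prune prefixes from the matrix once up front and work with $\tilde A$ throughout (making the estimator unbiased) instead of estimating $Ax^\star$ with an $O(\eta W)$ bias, you bound the expected deviation by $\sqrt{\Var}$ rather than Chebyshev plus peeling, and you derandomize the final rounding by conditional expectations instead of repeating the randomized rounding $O(1/\eta)$ times --- an option the paper itself notes.
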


Since the graph is $\Delta$-wide, most vertices have their weight
spread over a large number of their neighbors. In this case, the prediction
vector allows us to obtain a good estimate $\hr$ of the optimal
neighborhood imbalance $r^*$ (the difference between the number of
neighbors a vertex has on its side versus the other side of the
optimal cut). We can then write an LP to assign fractional labels to
vertices that maximize the cut value while remaining faithful to these
estimates $\hr$; finally rounding the LP gives the solution.

\subsubsection{The $\Delta$-wide Algorithm}
\label{sec:delta-wide-algorithm}

Define an $n\times n$ matrix $\tA$ from the adjacency 
matrix $A$ as follows: for each row corresponding to the edges incident
to a vertex $i$, we set the entry $\tA_{ij} = 0$ if the edge $(i, j)$
is in the $\Delta$-prefix of vertex $i$; otherwise, $\tA_{ij} = A_{ij}$.
Now, define an $n$-dimensional vector $\hr$ as follows:
\[
  \hr_i = \begin{cases}
    \frac {1}{2\eps} (\tA Y)_i & \text{if $i$ is $\Delta$-wide}\\
    0 & \text{if $i$ is $\Delta$-narrow}
  \end{cases}
\]
where $Y$ is the prediction vector as defined in~(\ref{eq:pred-noisy}).
Solve the linear program: 
\begin{gather}
  \min_{x \in [-1,1]^n} \,\ip{ \hr, x } \quad s.t. \quad \| \hr - Ax \|_1  \leq (\eps' + 2\eta) W.  \label{eq:4}
\end{gather}
Let $\hx \in [-1,1]^n$ be the optimal LP solution.

Finally, do the following $O(\nf1\eta)$ times independently, and
output the best cut $X^*$ among them: randomly round the fractional
solution $\hx$ independently for each vertex to get a cut
$X \in \{-1,1\}^n$; namely, $\Pr[X_i = 1] = \nf{(1+\hx_i)}2$ and
$\Pr[X_i = -1] = \nf{(1-\hx_i)}2$.

\subsubsection{The Analysis}

For a labeling $x \in \{-1,1\}^n$, the \emph{neighborhood imbalance}
for vertex $i$ is defined as $\sum_j A_{ij} x_j = (Ax)_i$. This denotes the 
(signed) difference between the total weight of edges incident to $i$ 
that appear and do not appear in the cut defined by the labeling $x$.
The maximality of the
optimal cut $x^* \in \{-1,1\}^n$ ensures that
$x^*_i \cdot \sign((Ax^*)_i) \leq 0$ for all $i$; else, switching $x_i$
from $1$ to $-1$ or vice-versa
increases the objective.  Define $r^* := Ax^*$ to
be the vector of imbalances for the optimal cut.

\begin{lemma}
  \label{lem:r-values-ok}
  The vector $\hr$ satisfies
  \[
     \EE\left[ \| \hr - r^* \|_1 \right] := \EE\left[ \sum_{i=1}^n
       |\hr_i - r^*_i | \right]
     \leq O\bigg(\frac{W}{\eps \sqrt{\Delta}}\bigg) + 2\eta W.
  \]
\end{lemma}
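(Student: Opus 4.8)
The plan is to split the error $\|\hr - r^*\|_1$ into two contributions: one coming from the $\Delta$-narrow vertices, where $\hr_i = 0$ and we simply pay $|r^*_i| \le W_i$, and one coming from the $\Delta$-wide vertices, where $\hr_i = \frac{1}{2\eps}(\tA Y)_i$ is a (rescaled) noisy estimate of $r^*_i = (Ax^*)_i$. For the narrow part, observe that $\sum_{i \in \Vn} |r^*_i| \le \sum_{i \in \Vn} W_i = \Wn \le \eta W$ by the definition of a $\Delta$-narrow graph (we are in the regime where the theorem about wide graphs applies, so $\Wn \le \eta W$); this already accounts for (part of) the additive $2\eta W$ term. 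Actually, I would be a bit more careful: the $2\eta W$ should absorb both the narrow-vertex contribution $\eta W$ and the systematic bias introduced on wide vertices by having zeroed out the $\Delta$-prefix (see below), and one checks $\eta W + \eta W = 2\eta W$.

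For a fixed $\Delta$-wide vertex $i$, write $Y_j = x^*_j Z_j$ where $Z_j \in \{-1,+1\}$ with $\Pr[Z_j = 1] = p = \nf12 + \eps$, so $\E[Z_j] = 2\eps$, and the $Z_j$ are pairwise independent. Then
\[
  \E[(\tA Y)_i] = \sum_j \tA_{ij} x^*_j \,\E[Z_j] = 2\eps \sum_j \tA_{ij} x^*_j = 2\eps\, (\tA x^*)_i,
\]
so $\E[\hr_i] = (\tA x^*)_i$, which differs from $r^*_i = (Ax^*)_i$ only on the $\Delta$-prefix edges of $i$; since $i$ is $\Delta$-wide, the total prefix weight is at most $\eta W_i$, so $|\E[\hr_i] - r^*_i| \le \eta W_i$, and summing over wide $i$ gives at most $\eta W$. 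For the variance, since the summands $\tA_{ij} x^*_j Z_j$ are pairwise independent with variance $\le \tA_{ij}^2 \le A_{ij}^2$, we get $\Var(\hr_i) = \frac{1}{4\eps^2}\Var((\tA Y)_i) \le \frac{1}{4\eps^2}\sum_j A_{ij}^2$. Now use that each entry $A_{ij}$ appearing in row $i$ of $\tA$ is not in the $\Delta$-prefix, hence is no larger than any of the $\Delta$ prefix entries; a standard counting argument then bounds $\sum_j A_{ij}^2 \le \frac{W_i^2}{\Delta}$ (each term is at most $W_i/\Delta$ and they sum to at most $W_i$). Hence $\sqrt{\Var(\hr_i)} \le \frac{W_i}{2\eps\sqrt{\Delta}}$, and by Jensen / Cauchy--Schwarz $\E|\hr_i - \E\hr_i| \le \sqrt{\Var(\hr_i)} \le \frac{W_i}{2\eps\sqrt{\Delta}}$. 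Summing over all wide $i$ gives $\sum_{i \in \Vw}\E|\hr_i - \E\hr_i| \le \frac{W}{2\eps\sqrt{\Delta}} = O\!\big(\frac{W}{\eps\sqrt{\Delta}}\big)$.

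Putting the pieces together via the triangle inequality,
\[
  \E\|\hr - r^*\|_1 \;\le\; \underbrace{\sum_{i\in\Vn}|r^*_i|}_{\le\,\eta W} \;+\; \underbrace{\sum_{i\in\Vw}|\E\hr_i - r^*_i|}_{\le\,\eta W} \;+\; \underbrace{\sum_{i\in\Vw}\E|\hr_i-\E\hr_i|}_{\le\,O(W/(\eps\sqrt\Delta))} \;\le\; O\!\Big(\frac{W}{\eps\sqrt{\Delta}}\Big) + 2\eta W,
\]
as claimed. The main obstacle is the second moment bound on $(\tA Y)_i$: one must be careful that only pairwise independence is available (so no concentration beyond Chebyshev, and in particular no union bound over vertices — which is exactly why the lemma is stated in expectation rather than with high probability), and one must correctly exploit the structure of $\tA$ — namely that deleting the $\Delta$ heaviest edges makes the remaining weights small relative to $W_i/\Delta$ — to convert $\sum_j A_{ij}^2$ into a bound of the form $W_i^2/\Delta$. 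Everything else is bookkeeping with the triangle inequality and the definitions of $\Delta$-wide vertices and graphs.
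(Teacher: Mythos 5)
Your proof is correct and follows essentially the same route as the paper: the same narrow/wide split (narrow vertices pay $\Wn \le \eta W$), the same bias/fluctuation decomposition for wide vertices with the bias bounded by the prefix weight $\eta W_i$, and the same key bound $\sum_j \tA_{ij}^2 \le \|\tA_i\|_1\|\tA_i\|_\infty \le W_i^2/\Delta$ using pairwise independence to make the variance additive. The only differences are cosmetic: you convert the variance bound to a first-moment bound directly via $\E|\hr_i - \E\hr_i| \le \sqrt{\Var(\hr_i)}$ (Jensen/Cauchy--Schwarz), which is a small simplification of the paper's Chebyshev-plus-dyadic-peeling step, and you should write $\sum_j \tA_{ij}^2$ rather than $\sum_j A_{ij}^2$ in the variance bound, since the $W_i^2/\Delta$ counting argument applies only to the suffix entries (the full row of $A$ need not satisfy it) --- your parenthetical restriction to entries of $\tA$ already makes this harmless.
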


\begin{proof}
    Observe that
    \[
        \EE[Y_i] 
        = x^*_i\cdot \Pr[Y_i = x^*_i] - x^*_i\cdot  \Pr[Y_i = -x^*_i]
        = x^*_i (\nf 12 + \eps) - x^*_i (\nf 12 - \eps)
        = 2\eps x^*_i.
    \]
  Define $Z := \frac{1}{2\eps}Y$. Then, $\EE[Z] = x^*$, and so
  $\EE[AZ] = r^*$. 

  First, we consider a $\Delta$-narrow vertex $i$. Since $\hr_i = 0$, we have
  $|\hr_i - r^*_i| = |r^*_i| \le W_i$.  
  So summing over all $\Delta$-narrow vertices gives
  \begin{equation}\label{eq:narrow}
    \sum_{i\in \Vn} |\hr_i - r^*_i| 
    \le \sum_{i\in \Vn} W_i
    \le \eta W,
  \end{equation}
  since the graph is $\Delta$-wide.

  Now, we consider a $\Delta$-wide vertex $i$.  We have
  \begin{equation}\label{eq:wide}
    |\hr_i - r^*_i| = |(\tA Z)_i - r^*_i| \le |\EE[(\tA Z)_i] - r^*_i| + |(\tA Z)_i - \EE[(\tA Z)_i]|.
  \end{equation}

  To bound the first term in the RHS of \eqref{eq:wide}, recall that $r^*_i = \EE[(AZ)_i]$. Thus,
  \[
      |\EE[(\tA Z)_i] - r^*_i|
      = |\EE[(\tA Z)_i] - \EE[(AZ)_i]|
      = \ip{(\tA - A)_i, \EE[Z]}.
  \]
  Since $\EE[Z] = x^*\in \{-1, 1\}^n$, we get 
  \[
    |\EE[(\tA Z)_i] - r^*_i|
    = (\tA - A)_i \cdot x^*
    \le \|(\tA - A)_i\|_1 \|x^*\|_\infty
    \le \eta W_i,
  \]
  where in the last step, we used the fact that $i$ is a $\Delta$-wide vertex.

\eat{
  Now, we bound the second term in the RHS of \eqref{eq:wide}.   Using
  a Hoeffding bound on the sum $\sum_j \tA_{ij}Z_j$, we get
  \[ \Pr[|(\tA Z)_i - \EE[(\tA Z)_i ]| \ge \lambda_i ] \leq \exp\bigg(-
    \frac{O(\lambda_i^2)}{\sum_{j\in [n]} (\tA_{ij}/\eps)^2}\bigg). \]
  We know
  \[
    \sum_{j\in [n]} \tA_{ij}^2 = \|\tA_i\|_2^2 \le \|\tA_i\|_1 \cdot \|\tA_i\|_\infty.
  \]
  Note that the weight of any edge in the $\Delta$-suffix of $i$ is at most $W_i/\Delta$.
  Therefore, by our definition of $\tA$, we have $\|\tA_i\|_\infty \le W_i/\Delta$.
  Since $\tA_{ij} \le A_{ij}$ for all $j\in [n]$, we also have $\|\tA_i\|_1 \le  \|A_i\|_1 = W_i$.
  Applying these bounds, We get:
  \[
    \sum_{j\in [n]} \tA_{ij}^2 \le W_i^2/\Delta.
  \]
  Therefore,
  \[ \Pr[|(\tA Z)_i - \EE[(\tA Z)_i ]| \ge \lambda_i ] \leq
    \exp\bigg(- \frac{O(\lambda_i^2) \cdot
      \Delta}{(W_i/\eps)^2}\bigg). \] Setting
  $\lambda_i := O({W_i}/(\eps \sqrt{\Delta}))$ makes
  the RHS at most $\nf12$. Since the RHS will at least exponentially
  decrease when the deviation bound becomes
  $2\lambda_i, 3\lambda_i, \dots$, so
  $\E[|(\tA Z)_i - \E[(\tA Z)_i|] \leq O(\lambda_i)$.
  Plugging back the bounds on the two terms in the RHS of \eqref{eq:wide}, we get
  for a $\Delta$-wide vertex $i$, 
  \[
    \EE[|\hr_i - r^*_i|] \le O\left(\frac{W_i}{\eps \sqrt{\Delta}}\right) + \eta W_i.
  \]
}

Now, we bound the second term in the RHS of \eqref{eq:wide}. Using
  Chebyshev's inequality on the sum $(\tA Z)_i = \sum_j \tA_{ij}Z_j$,
  we get
  \[ \Pr[|(\tA Z)_i - \EE[(\tA Z)_i ]| \ge \lambda_i ] \leq
    \frac{\text{var}((\tA Z)_i)}{\lambda_i^2}. \] Since the variables
  $Z_j$ are pairwise independent, the variance
  $\text{var}((\tA Z)_i) = \sum_j \tA^2_{ij} \text{var}(Z_j)$. The
  individual variances are at most $1$, hence we need to bound $\sum_j
  \tA^2_{ij}$. 
  We know
  \[
    \sum_{j\in [n]} \tA_{ij}^2 = \|\tA_i\|_2^2 \le \|\tA_i\|_1 \cdot \|\tA_i\|_\infty.
  \]
  Note that the weight of any edge in the $\Delta$-suffix of $i$ is at most $W_i/\Delta$.
  Therefore, by our definition of $\tA$, we have $\|\tA_i\|_\infty \le W_i/\Delta$.
  Since $\tA_{ij} \le A_{ij}$ for all $j\in [n]$, we also have $\|\tA_i\|_1 \le  \|A_i\|_1 = W_i$.
  Applying these bounds, We get:
  \[
    \sum_{j\in [n]} \tA_{ij}^2 \le W_i^2/\Delta.
  \]
  Therefore,
    \[ \Pr[|(\tA Z)_i - \EE[(\tA Z)_i ]| \ge \lambda_i ] \leq
      \frac{(W_i/\eps)^2}{O(\lambda_i^2) \cdot
        \Delta}. \] Setting
    $\lambda_i := 2^t \cdot O({W_i}/(\eps \sqrt{\Delta}))$ makes the right hand
    side at most $\frac{1}{2^{2t}}$, say. 
    Now, for a non-negative random variable $Z$ and some scalar $z$,
    we have \[ \E[Z] \leq z + \sum_{t\geq 0} \Pr[ Z \geq 2^t\cdot z]
    \cdot 2^{t+1}z. \] Using this above, we get that
    \[ \E[|(\tA Z)_i - \EE[(\tA Z)_i ]|] = O({W_i}/(\eps
      \sqrt{\Delta})). \]
  
  Summing over all $\Delta$-wide vertices, we get
  \[
    \EE\left[\sum_{i\in \Vw} |\hr_i - r^*_i|\right]
    \le O\left(\frac{\Ww}{\eps \sqrt{\Delta}}\right) + \eta \Ww
    \le O\left(\frac{W}{\eps \sqrt{\Delta}}\right) + \eta W.
  \]
  Combining with \eqref{eq:narrow} for $\Delta$-narrow vertices, we get
  \[
    \EE\left[\|\hr_i - r^*_i\|_1\right]
    \le O\left(\frac{W}{\eps \sqrt{\Delta}}\right) + 2\eta W.\qedhere
  \]
\end{proof}

Now using Markov's inequality with \Cref{lem:r-values-ok}, we get that
setting $\Delta = \Omega(1/(\eps \eps')^2)$ for any fixed constant
$\eps' > 0$ ensures that we get a vector of empirical imbalances $\hr$
satisfying
\begin{gather}
  \| \hr - r^* \|_1 \leq (\eps'+2\eta) W. \label{eq:5}
\end{gather}
with probability at least $0.99$. (Since the $2\eta W$ losses are
deterministically bounded, we can use Markov's inequality only on the
random variable $\sum_{i \in \Vw} |(\tA Z)_i - \E[(\tA Z)_i]|$.)
Hence, when the event in (\ref{eq:5}) happens, the vector $x^*$ is a
feasible solution to LP~(\ref{eq:4}).

Next, we need to analyze the quality of the cut produced by randomly
rounding the solution of LP~(\ref{eq:4}). Recall that for the
(unnormalized) Laplacian $L$ and some $x \in \{-1,1\}^n$, the cut
value is
\begin{gather}
  f(x) := \nf14 \cdot \ip{x, Lx} = \nf14 \cdot ( \ip{ x, Dx } - \ip{ x, A x}) = \nf14 \cdot ( W - \ip{ x, A x}). \label{eq:1}
\end{gather}

\begin{lemma}
  \label{lem:algo-value}
  For any $\Delta$-wide graph, the algorithm
  from \S\ref{sec:delta-wide-algorithm} outputs 
  $X^* \in \{-1,1\}^n$ that satisfies
    \[ f(X^*) \ge f(x^*) - (2\eps'+5\eta)W\]
  with probability at least $0.98$.
\end{lemma}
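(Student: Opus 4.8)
The plan is to argue conditionally on the ``good event'' of \eqref{eq:5}, namely $\| \hr - r^* \|_1 \le (\eps' + 2\eta) W$, which by \Cref{lem:r-values-ok} together with Markov's inequality occurs with probability at least $0.99$ (this is exactly the deduction made right after \Cref{lem:r-values-ok}, using $\Delta = \Omega(1/(\eps\eps')^2)$). On this event $A x^* = r^*$ is within $\ell_1$-distance $(\eps'+2\eta)W$ of $\hr$, so $x^*$ is a feasible point of the linear program~\eqref{eq:4}; consequently the LP optimum $\hx$ satisfies $\ip{\hr, \hx} \le \ip{\hr, x^*}$.

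Next I would trade the linear surrogate $\ip{\hr, \cdot}$ for the true quadratic form $\ip{\cdot, A\,\cdot}$, paying only $\ell_1$-error. Since $\|x^*\|_\infty = 1$, Hölder's inequality gives $\ip{\hr, x^*} \le \ip{r^*, x^*} + \|\hr - r^*\|_1 \le \ip{x^*, A x^*} + (\eps'+2\eta)W$; and since $\|\hx\|_\infty \le 1$ with $\hx$ LP-feasible, $\ip{\hx, A\hx} \le \ip{\hr, \hx} + \|A\hx - \hr\|_1 \le \ip{\hr, \hx} + (\eps'+2\eta)W$. Chaining these with the LP-optimality inequality $\ip{\hr, \hx} \le \ip{\hr, x^*}$ from the previous paragraph yields $\ip{\hx, A\hx} \le \ip{x^*, A x^*} + 2(\eps'+2\eta)W$. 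Because the rounding sets $\Pr[X_i = 1] = (1+\hx_i)/2$ independently across $i$ and $A$ has zero diagonal, $\EE[\ip{X, A X}] = \ip{\hx, A\hx}$; combining with the identity~\eqref{eq:1} for $f$ gives $\EE[f(X)] = \tfrac14(W - \ip{\hx, A\hx}) \ge \tfrac14(W - \ip{x^*, Ax^*}) - \tfrac12(\eps'+2\eta)W = f(x^*) - \tfrac12(\eps'+2\eta) W$, and this holds for every realization of $Y$ in the good event.

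The remaining step, which I expect to be the main obstacle, is to promote this bound on the mean to a bound on the \emph{best} of the $O(1/\eta)$ independent roundings retained by the algorithm: a single rounding need not concentrate, since $f(X)$ ranges over the whole interval $[0, W/2]$, far wider than the $\eta W$ slack we can afford. Here I would apply Markov's inequality in reverse to the nonnegative random variable $\tfrac W2 - f(X)$ (every cut has value at most $W/2$): conditioning on a realization of $Y$ in the good event and writing $\mu := \EE[f(X)\mid Y] \ge f(x^*) - \tfrac12(\eps'+2\eta)W$, we get $\Pr[\,f(X) < \mu - \eta W \mid Y\,] \le \frac{W/2 - \mu}{W/2 - \mu + \eta W} \le \frac{1}{1+2\eta} \le 1-\eta$ (using $\eta < \tfrac12$). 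Conditioned on $Y$ the $T = O(1/\eta)$ roundings are mutually independent, so all of them fall below $\mu - \eta W$ with probability at most $(1-\eta)^T \le 0.01$ once $T$ is a large enough constant multiple of $1/\eta$; hence the best cut $X^*$ satisfies $f(X^*) \ge \mu - \eta W \ge f(x^*) - (\tfrac12\eps'+2\eta)W$ with conditional probability at least $0.99$. A union bound over the (at most $0.01$) failure probability of \eqref{eq:5} and the (at most $0.01$) failure of the amplification gives $f(X^*) \ge f(x^*) - (\tfrac12\eps' + 2\eta)W$ with probability at least $0.98$, which is stronger than the claimed $f(X^*) \ge f(x^*) - (2\eps'+5\eta)W$.
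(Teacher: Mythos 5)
Your proposal is correct and follows essentially the same route as the paper's proof: condition on the event $\|\hr-r^*\|_1\le(\eps'+2\eta)W$, use LP optimality of $\hx$ against the feasible point $x^*$ plus two H\"older/$\ell_1$ bounds, note the rounding is unbiased for $\ip{\hx,A\hx}$, and amplify over $O(1/\eta)$ independent roundings via a reverse Markov bound. Your Markov step on $\tfrac{W}{2}-f(X)$ is just an affine rescaling of the paper's Markov step on $\ip{X,AX}+W$, and your slightly sharper constant $(\tfrac12\eps'+2\eta)W$ merely comes from carrying the factor $\tfrac14$ from $f(x)=\tfrac14(W-\ip{x,Ax})$, which the paper drops when stating the lemma.
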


\begin{proof}
  Recall that the cut $X^*$ is the best among
  $T := O(\nf1\eta)$ independent roundings of cut
  $\hx$. Consider one of the roundings $X$, and write:
  \begin{gather}\label{eq:bound}
    \ip{X,AX} = \ip{\hx, \hr} + (\ip{\hx, A\hx} - \ip{\hx, \hr}) +
    (\ip{X, AX} - \ip{\hx, A\hx}).
  \end{gather}
  Let us first bound the expectation of each of the terms in the RHS
  of \eqref{eq:bound} separately.

  To bound the first term $\ip{\hx, \hr}$, note that given
  \eqref{eq:5} (which happens with probability $0.99$), the solution $x^*$ is feasible for the LP in
  \eqref{eq:4}. This means the optimal solution $\hx$ has objective function value
  \begin{align}
   \ip{\hr,\hx} &\leq \ip{ \hr, x^* } = \ip{ r^*, x^*} + \ip{ \hr -
      r^*, x^*} \leq \ip{x^*, Ax^*} + \| \hr - r^* \|_1 \| x^*
    \|_\infty \notag \\ &\leq \ip{x^*, Ax^*} + (\eps'+2\eta) W. \label{eq:6}
  \end{align}
  Next, we bound the second term $(\ip{\hx, A\hx} - \ip{\hx, \hr})$ by
  \begin{gather}
    \| \hx \|_{\infty} \| A\hx - \hr \|_1 \leq (\eps' + 2\eta) W, \label{eq:3}
  \end{gather}
  by feasibility of $\hx$ for the LP in \eqref{eq:4}.
  Finally, we bound the third term $(\ip{X, AX} - \ip{\hx, A\hx})$,
  this time in expectation:
  \begin{gather}
    \E[\ip{X, AX}] - \ip{\hx, A\hx} = 0. \label{eq:2}
  \end{gather}
  Chaining \cref{eq:6,eq:3,eq:2} for the various parts of~(\ref{eq:bound}),
  we get 
  \[
    \E[\ip{X, AX}] \leq \ip{x^*, Ax^*} + (2\eps'+4\eta)W.
\]

    Moreover, using that $\ip{X, AX} \in [-W,W]$, we get
  \begin{align*}
    \Pr\big[\ip{X, AX} \geq \E[\ip{X, AX}] + \eta W\big] &=
    \Pr\big[\ip{X, AX} + W \geq \E[\ip{X, AX}] + (1+\eta) W\big]\\ 
    &\leq  
    \Pr\big[\ip{X, AX} + W \geq (1+\eta/2)\,\left(\E[\ip{X, AX}] + W\big]\right)\\
    &\leq \nicefrac{1}{(1+\eta/2)}.
  \end{align*}
  If $X^*$ is the cut with the smallest value of $\ip{X, AX}$ among
  all the independent roundings:
  \[ \Pr\big[\ip{X^*, AX^*} \leq \ip{x^*, Ax^*} + (2\eps'+5\eta)W \big]
    \geq 1 - (\nicefrac{1}{(1+\eta/2)})^T \geq 0.99. \]
  Substituting into the definition of $f(\cdot)$ completes the proof.

\end{proof}

This proves \Cref{lem:algo-value}, and hence also
\Cref{thm:high-degree}.

\emph{Deterministic Rounding.} We observe that we can replace the
repetition by a simple pipeage rounding algorithm to round the
fractional solution $\hx$ to an integer solution $X^*$ without
suffering any additional loss. Indeed, viewing $\ip{x, Ax}$ as a
function of some $x_i$ keeping the remaining
$\{x_1, \ldots, x_n\} \setminus \{x_i\}$ fixed gives us a linear
function of $x_i$ (since the diagonals of $A$ are zero). Hence we can
increase or decrease the value of $x_i$ to decrease $\ip{x,Ax}$ until
$x_i \in \{-1,1\}$. Iterating over the variables gives the
result. However, this does not change the result qualitatively.

\eat{
\dpnote{
Finally, we bound the third term, $(\ip{X, AX} - \ip{\hx, A\hx})$. 
We define a matrix $A'$ from the adjacency matrix $A$ as follows:
(i) for each row corresponding to the edges incident to a 
$\Delta$-narrow vertex $i$, we set the entries $A'_{ij} = 0$, and 
(ii) for each row corresponding to the edges incident to a
$\Delta$-wide vertex $i$, we set the entry $A'_{ij} = 0$ if 
the edge $(i, j)$ is in the $\Delta$-prefix of vertex $i$; 
otherwise, $A'_{ij} = A_{ij}$. Next, we create another matrix
$\hA$ by symmetrizing $A'$ as follows: for every $i, j\in [n]$,
we set $\hA_{ij} = \min(A'_{ij}, A'_{ji})$. Observe that $\hA$
is the adjacency matrix corresponding to the graph formed by 
removing from the input graph all edges
that are either incident to a $\Delta$-narrow vertex or appear in 
the $\Delta$-prefix of one of its ends.

We rewrite the third term as 
\begin{gather}\label{eq:third}
    \ip{X, AX} - \ip{\hx, A\hx} = 
    (\ip{X, \hA X} - \ip{\hx, \hA\hx}) +
    (\ip{X, (A-\hA)X} - \ip{\hx, (A-\hA)\hx}).
\end{gather}
First, we bound the second term in the RHS of \eqref{eq:third}. 
We have
\[
    \ip{X, (A-\hA)X} - \ip{\hx, (A-\hA)\hx}
    \le |\ip{X, (A-\hA)X}| + |\ip{\hx, (A-\hA)\hx}|
    \le \sum_{i,j\in [n]} (A_{ij} - \hA_{ij}),
\]    
where the last inequality used $X, \hx \in [-1, 1]^n$. Now, since
the input graph is $\Delta$-wide, 
$\sum_{i,j\in [n]} (A_{ij} - \hA_{ij}) \le 0.02 W$.
Thus, 
\[
    \ip{X, (A-\hA)X} - \ip{\hx, (A-\hA)\hx}
    \le 0.02 W.
\]    

We are left to bound the first term in the RHS of \eqref{eq:third}.
Define $U_{ij} := \hA_{ij}(X_iX_j - \hx_i\hx_j)$. Then
  $U = \sum_{i, j\in [n]} U_{ij} = \langle X, \hA X \rangle - \langle \hx, \hA
  \hx \rangle$. By independence of $X_i, X_j$,
  \[ \E[ U_{ij} ] = 0 \implies \E[U] = 0.\]
  Therefore,
  \[ \text{Var}(U) = \E[U^2] = \sum_{i,j, k,l \in [n]} \E[U_{ij} U_{kl}]. \]
  But if all four vertices $\{i,j,k,l\}$ are distinct, then
  $\E[U_{ij} U_{kl}] = 0$. Else we have some vertices in common: $\E[U_{ij}U_{ik}] =
  \hA_{ij} \hA_{ik}\hx_j\hx_k\E[(X_i - \hx_i)^2]$ and $\E[U_{ij}^2] = \E[\hA_{ij}^2 (X_iX_j -
  \hx_i\hx_j)^2]$. Summing up over all these terms, we get
  \[
    \text{Var}(U) \le O\left(\sum_{i\in [n]}\left(\sum_{j\in [n]} \hA_{ij}\right)^2\right).
  \]
  We bound this variance below:
  \begin{lemma}\label{lem:var} 
    $\sum_{i\in [n]}\left(\sum_{j\in [n]} \hA_{ij}\right)^2 \le W^2/\Delta$.
  \end{lemma}
  \begin{proof}
    Let $B_i = \sum_j \hA_{ij}$. Then, $\|B\|_1 = \sum_{i, j} \hA_{ij} \le \sum_{i, j} A_{ij} = W$. Now, consider any $i, j$ with $\hA_{ij} > 0$. this implies the edge $(i, j)$ is not in the $\Delta$-prefix of vertex $j$ (also not in the $\Delta$-prefix of vertex $i$). Thus, $\hA_{ij} \le W_j/\Delta$. Aggregating over all $j$, we get
    \[
        B_i = \sum_j \hA_{ij} \le \sum_j W_j/\Delta = W/\Delta.
    \]
    Thus, $\|B\|_\infty = \max_i B_i \le W/\Delta$.
    Therefore,
    $\sum_i\left(\sum_j \hA_{ij}\right)^2
        = \|B\|_2^2
        \le \|B\|_1 \cdot \|B\|_\infty
        \le W^2/\Delta$.
  \end{proof}
  Using Chebyshev's inequality for $U$, we get
  \[
    \Pr[U > \eps' W] < 1/(\eps'\sqrt{\Delta})^2 < 0.01,
  \]
  for $\Delta = \Omega(1/\eps'^2)$.
}
}

\subsection{Solving \MC for $\Delta$-narrow graphs}
\label{sec:low-degree}

Next, we consider $\Delta$-narrow graphs. Our main result for them is the following.    

\begin{theorem}\label{thm:narrow}
    For any $\Delta \in \N$, there is a randomized 
    algorithm for the \MC problem with an (expected) approximation factor of 
    $\alphaGW + \tilde{\Omega}(\eta^5/\Delta^2)$ on any $\Delta$-narrow graph.
\end{theorem}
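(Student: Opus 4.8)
The plan is to reduce the $\Delta$-narrow case to the bounded-degree case, where the Hsieh--Kothari refinement of the Feige--Karpinski--Langberg algorithm gives, on any weighted graph $\Gamma$ of maximum degree $d$, a cut of value at least $\alphaGW\cdot\MC(\Gamma)+\tilde{\Omega}(w(\Gamma)/d^{2})$, with $w(\Gamma)$ the total edge weight. Concretely, I would (i) carve out of the $\Delta$-narrow graph $G$ a subgraph $H$ of maximum degree $d=\tilde{O}(\Delta\cdot\eta^{-O(1)})$ carrying edge weight $\eta^{O(1)}\cdot W$, (ii) run Hsieh--Kothari on $H$, and (iii) stitch the resulting cut together with a Goemans--Williamson cut of the rest of $G$ so that the additive advantage $\tilde{\Omega}(\eta^{O(1)}/\Delta^{2})\cdot W$ over $\alphaGW\cdot\MC(H)$ becomes an advantage $\tilde{\Omega}(\eta^{5}/\Delta^{2})\cdot W$ over $\alphaGW\cdot\MC(G)$; since $W/4\le\MC(G)\le W/2$, this is the claimed multiplicative gain. (Here $\eta$ is an absolute constant, so the dependence on $\eta$ is only bookkeeping, but tracking it is what pins down the exponent $5$.)

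For step (i): since $G$ is $\Delta$-narrow we have $\Wn>\eta W$, and each narrow vertex $i$ has more than $\eta W_{i}$ weight on its $\Delta$ heaviest edges; collecting all these prefix edges gives an edge set $F_{0}$ with $w(F_{0})\ge\eta^{2}W/2$ (each edge double counted at most twice). Partition $F_{0}=F_{\mathrm{both}}\cup B$ according to whether an edge lies in the $\Delta$-prefix of both of its endpoints or of exactly one. Every vertex meets at most $\Delta$ edges of $F_{\mathrm{both}}$, so if $w(F_{\mathrm{both}})\ge\eta^{2}W/4$ we may take $H=(V,F_{\mathrm{both}})$ with $d=\Delta$ and are essentially done. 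The harder case is $w(B)\ge\eta^{2}W/4$: orienting each edge of $B$ from its prefix-owner gives out-degree at most $\Delta$, so $B$ is $O(\Delta)$-degenerate, but it may have vertices of arbitrarily large in-degree --- these are the ``centres'' of heavy star-like pieces, and indeed a single star is a $\Delta$-narrow graph. The fix is to first detach the near-bipartite star-like pieces hanging off high-in-degree vertices, recording an (essentially optimal) cut for each since each such piece is near-bipartite and its maximum cut equals its total weight up to a negligible amount, and then on the residue cap the in-degree at $\tilde{O}(\Delta\cdot\eta^{-O(1)})$ by keeping at each vertex only its heaviest incoming edges; a weighted peeling/charging argument shows this discards only an $\eta^{O(1)}$-fraction of $w(B)$, leaving a bounded-degree $H$ of the required weight.

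For step (iii): assembling the Hsieh--Kothari cut on $H$, the recorded cuts on the detached bipartite pieces, and an SDP-based cut on what is left must be done so the gain survives. The subtlety is that the cut returned on $H$ can be adversarial on the edges of $G$ outside $H$, so one cannot just fix it and round the remainder --- that could destroy an $\Omega(W)$ amount of cut value, dwarfing the $\tilde{\Omega}(\eta^{O(1)}/\Delta^{2})\cdot W$ gain. One must either argue that the edges outside $H$ (and outside the bipartite pieces) carry little weight, or combine through a single rounding of $G$ that simultaneously realizes the Goemans--Williamson guarantee on the bulk and the Hsieh--Kothari-style improvement on the $H$-part, in the spirit of the marginals-preserving rounding of Raghavendra--Tan used later in the paper.

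The main obstacle is therefore the $B$-case together with this combination: recognizing that the heavy star-like structures are near-bipartite (hence contribute essentially nothing to the integrality gap), stripping them off and re-attaching their cuts cleanly, and performing the degree capping on the residue while losing only an $\eta^{O(1)}$-fraction of the weight. The invocation of Hsieh--Kothari itself, and the arithmetic converting an additive $W$-gain into a multiplicative $\MC(G)$-gain, are routine by comparison.
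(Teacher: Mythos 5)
Your proposal is not a proof: the two steps you yourself flag as ``the main obstacle'' are exactly where the difficulty lives, and neither is resolved. In the $B$-case, the in-degree capping and the ``detaching of near-bipartite star-like pieces'' are not carried out, and the reduction as described cannot be made clean: the star-like pieces are not vertex-disjoint from the residue, so ``recording an essentially optimal cut for each piece'' and re-attaching it later conflicts with whatever cut you subsequently produce on the rest of $G$ (a leaf of a heavy star may carry most of its weight to other parts of the graph), and the edges crossing between the pieces, the capped subgraph $H$, and the remainder can have weight $\Omega(W)$ --- dwarfing the target gain of $\tilde\Omega(\eta^{O(1)}W/\Delta^{2})$. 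The same issue defeats step (iii) as stated: fixing the Hsieh--Kothari cut on $H$ and rounding the remainder can lose $\Omega(W)$ on crossing edges, and you do not exhibit the ``single rounding that simultaneously realizes'' both guarantees; the Raghavendra--Tan rounding does not supply it (it preserves marginals of vertices whose vectors are pinned, it does not yield a local-flip advantage on a low-degree part). So the argument has a genuine hole at its center, acknowledged but not filled.

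The paper avoids the decomposition entirely, and this is the idea your write-up is missing. It runs the FKL algorithm on all of $G$: solve the triangle-inequality SDP, do hyperplane rounding, and then flip exactly those vertices whose vectors land within distance $\delta$ of the hyperplane \emph{and} whose flip locally increases the cut. Because flips are only performed when they help, the $\alphaGW$ guarantee on the whole graph is never degraded, so no stitching problem ever arises. The only place the degree bound $d$ enters the Hsieh--Kothari analysis is the inequality $\|w\|_1^2 \le d\,\|w\|_2^2$ for a vertex's weight vector, and a $\Delta$-narrow vertex satisfies it with $d=\Delta/\eta^2$, since its top-$\Delta$ projection $w'$ obeys $\|w\|_2^2 \ge \|w'\|_2^2 \ge \|w'\|_1^2/\Delta \ge \eta^2\|w\|_1^2/\Delta$. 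Hence each narrow vertex, conditioned on falling in the flip zone (probability $\Omega(\delta)=\Omega(1/(d\sqrt{\log d}))$), contributes expected local gain $\Omega(W_i/(d\sqrt{\log d}))$; summing over the narrow vertices, whose weighted degrees total at least $\eta W \ge 2\eta\cdot\opt$ by $\Delta$-narrowness of the graph, gives the additive $\tilde\Omega(\eta^5/\Delta^2)\cdot\opt$ advantage on top of $\alphaGW\cdot\opt$. In short, the improvement is realized by local flips on top of one global rounding of $G$, not by cutting a bounded-degree subgraph and merging afterwards; without that (or a worked-out substitute for your steps (i) and (iii)), the proposal does not establish the theorem.
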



For the case of $\Delta$-narrow graphs, we do not use predictions;
rather, we adapt an existing algorithm for the \MC problem for
low-degree graphs by Feige, Karpinski, and Langberg~\cite{FeigeKL02}
and its refinement due to Hsieh and Kothari~\cite{HsiehK22}.  Note
that any graph with maximum degree $\Delta$ is clearly $\Delta$-narrow
(even when $\eta = 1$).

\subsubsection{The $\Delta$-narrow Algorithm}
\label{sec:delta-narrow-algorithm}

We show that \Cref{thm:narrow} holds for the Feige, Karpinski, and
Langberg (FKL) \MC algorithm~\cite{FeigeKL02}. We briefly recall this
algorithm first. Consider the \MC SDP with triangle inequalities:
\[
    \max_{v_i\in S_n~\forall i\in [n]} \sum_{i, j\in [n]: A_{ij} = 1} \frac{1-\ip{v_i, v_j}}{2} \quad s.t. \quad s( \ip{v_j, v_k} + \ip{v_i, v_k}) \le \ip{v_i, v_j} \quad \forall i, j, k \in [n], s\in \{-1, 1\}.
\]
where $S_n$ is the unit sphere of $n$ dimensions. Let $\hv$ be an optimal solution to this SDP.

Let $g$ be a unit vector chosen uniformly at random from $S_n$. We use
\emph{random hyperplane rounding} (cf.\ the Goemans-Williamson \MC algorithm~\cite{GoemansW95}) to round $\hv$ to $\hx\in \{-1, 1\}^n$ as follows: if $\ip{\hv_i, g} > 0$, then $\hx_i = 1$; else, $\hx_i = -1$. 

Now, define $F = \{i\in [n]: \ip{\hv_i, g} \in [-\delta, \delta]  \}$ for some $\delta = \Theta(1/((\Delta/\eta)\sqrt{\log (\Delta/\eta)}))$. 
We partition $N_i := [n] \setminus \{ i \}$ as follows: $B_i := \{j\in N_i\setminus F: \hx_j = \hx_i\}$, and $C_i := \{j\in N_i\setminus F: \hx_j \not= \hx_i\}$ and $D_i := N_i\cap F$. We define $F'\subseteq F$ as follows: $i\in F'$ if $i\in F$ and $w(B_i) > w(C_i)+w(D_i)$ where $w(S) := \sum_{j \in S} A_{ij}$. In the final output $X\in \{-1, 1\}^n$, we flip the vertices in $F'$, namely $X_i = -\hx_i$ if $i\in F'$, else $X_i = \hx_i$.

\subsubsection{The Analysis}

The ``local gain'' for a vertex $i\in F$ is defined as $\Delta_i := (|B_i| - (|A_i| + |C_i|))^+$, where $z^+ = \max(z, 0)$.
We now state the following key lemmas: 

\begin{lemma}\label{lem:hk1}
     For any vertex $i\in [n]$, $\Pr[i\in F] = \Omega(\delta)$.
\end{lemma}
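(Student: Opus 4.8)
The plan is to analyze the random hyperplane rounding step directly, exploiting only the fact that each $\hv_i$ is a \emph{unit} vector and $g$ is uniform on $S_n$. By definition $i \in F$ iff $\langle \hv_i, g\rangle \in [-\delta, \delta]$, so I would first observe that $\langle \hv_i, g\rangle$ is distributed exactly as the first coordinate of a uniformly random point on the unit sphere $S_n$ (by rotational invariance of $g$), equivalently as $g_1$ where $g = (g_1, \dots, g_n)$ is uniform on $S_n$. Thus $\Pr[i \in F] = \Pr[|g_1| \le \delta]$, a quantity that does not depend on $i$ at all.

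Next I would lower-bound $\Pr[|g_1| \le \delta]$. The density of $g_1$ on $[-1,1]$ is proportional to $(1 - t^2)^{(n-3)/2}$; on the interval $|t| \le \delta$ with $\delta$ small this density is bounded below by a constant fraction of its value at $t = 0$, and the normalizing constant is $\Theta(\sqrt n)$. A cleaner route that avoids the $n$-dependence is to use the standard Gaussian surrogate: let $h \sim \calN(0, I_n)$, so that $g = h/\|h\|$ and $g_1 = h_1/\|h\|$. Since $\|h\| \le 2\sqrt n$ with probability $\ge 1/2$ (say), we get $\Pr[|g_1| \le \delta] \ge \tfrac12 \Pr[|h_1| \le 2\delta\sqrt n] \ge \tfrac12 \cdot \tfrac{2\cdot 2\delta\sqrt n}{\sqrt{2\pi e}}$ for $2\delta\sqrt n \le 1$, using that the standard Gaussian density is $\ge e^{-1/2}/\sqrt{2\pi}$ on $[-1,1]$. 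Either way, $\Pr[|g_1| \le \delta] = \Omega(\delta)$ whenever $\delta = O(1/\sqrt n)$, and in the regime of interest $\delta = \Theta(1/((\Delta/\eta)\sqrt{\log(\Delta/\eta)}))$ is certainly $O(1/\sqrt n)$ (indeed we may assume $\Delta/\eta \le n$, else every vertex is trivially $\Delta$-narrow and the whole graph is tiny). This yields $\Pr[i \in F] = \Omega(\delta)$, as claimed.

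The only mild subtlety — and the step I expect to need the most care — is making the lower bound on the anticoncentration of $g_1$ clean and dimension-free: a naive density calculation introduces a $\sqrt n$ factor that must be cancelled against the range $[-\delta,\delta]$, and one must be sure the constant hidden in $\Omega(\delta)$ does not secretly depend on $n$. The Gaussian-surrogate argument above sidesteps this cleanly, so I would present that. Everything else is immediate from rotational invariance, and no property of the SDP solution beyond $\|\hv_i\|_2 = 1$ is used.
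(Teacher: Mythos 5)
Your route differs from the paper's only in that the paper simply invokes Fact~3 of Hsieh--Kothari \cite{HsiehK22} for this statement, while you re-derive the anticoncentration fact from scratch via rotational invariance. The reduction $\Pr[i \in F] = \Pr[|g_1| \le \delta]$ is fine, and the lemma is indeed true for every $\delta \in (0,1]$ with a dimension-free constant. However, two steps of your write-up are wrong as stated. First, your claim that the regime of interest has $\delta = O(1/\sqrt{n})$ is backwards: here $\Delta = \Theta(1/\eps^2)$ and $\eta$ is an absolute constant, so $\delta = \Theta\bigl(1/((\Delta/\eta)\sqrt{\log(\Delta/\eta)})\bigr)$ is a constant independent of $n$, and for large $n$ one is squarely in the regime $\delta \gg 1/\sqrt{n}$ --- precisely the case your argument does not cover. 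The assumption $\Delta/\eta \le n$ only \emph{lower}-bounds $\delta$; it does not give $\delta = O(1/\sqrt{n})$. The fix is easy but must be said: for $\delta \ge 1/\sqrt{n}$ one has $\Pr[|g_1|\le\delta] \ge \Pr[|g_1|\le 1/\sqrt{n}] = \Omega(1) \ge \Omega(\delta)$, i.e.\ uniformly $\Pr[|g_1|\le\delta] = \Omega(\min(1,\delta\sqrt{n})) \ge \Omega(\delta)$ since $\delta \le 1$.

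Second, the Gaussian-surrogate inequality is derived from an event inclusion that goes the wrong way. Writing $g_1 = h_1/\|h\|$, the event $\{|g_1|\le\delta\}$ is $\{|h_1| \le \delta\|h\|\}$, so to make it likely you need a \emph{lower} bound on $\|h\|$, not the upper bound $\|h\| \le 2\sqrt{n}$ you use: from $|h_1| \le 2\delta\sqrt{n}$ and $\|h\| \le 2\sqrt{n}$ one cannot conclude $|h_1| \le \delta\|h\|$ (take $\|h\| = \sqrt{n}$ and $|h_1| = 1.5\,\delta\sqrt{n}$). The correct pairing is, e.g., $\{|h_1| \le \tfrac{\delta}{2}\sqrt{n-1}\} \cap \{\|(h_2,\dots,h_n)\| \ge \tfrac12\sqrt{n-1}\} \subseteq \{|h_1| \le \delta\|h\|\}$; using the norm of $(h_2,\dots,h_n)$ rather than $\|h\|$ also repairs a second, smaller issue in your factor-of-$\tfrac12$ step, namely that $h_1$ and $\|h\|$ are not independent, whereas $h_1$ and $\|(h_2,\dots,h_n)\|$ are. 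With these two corrections (the complementary regime for $\delta$, and the reversed norm bound), your elementary argument does prove the lemma and is essentially the content of the cited Fact~3.
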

\begin{proof}
    This lemma immediately follows from \cite[Fact 3]{HsiehK22}.
\end{proof}

Let $i$ be a $\Delta$-narrow vertex, and $w \in \R^{n}$ be its weight vector ($w_i = A_{ij}$ for all $j \in [n]$) so that $W_i = \| w_i \|_1$. 
Let $w' \in \R^n$ be the projection of $w$ onto its top $\Delta$ coordinates. 
The narrowness of $i$ implies that $\| w' \|_1 \geq \eta \| w \|_1$, which implies that 
\[
\| w \|_2^2 \geq 
\| w' \|_2^2 \geq \frac{ \| w' \|_1^2}{\Delta} \geq 
\frac{\eta^2 \| w \|_1^2}{ \Delta}.
\]

It turns out that the analysis of~\cite{HsiehK22} still holds under the above bound between $\ell_1$ and $\ell_2$ norms of weight vectors. So we have the following slight generalization of their Lemma 8. 

\begin{lemma}[extends Lemma 8 of \cite{HsiehK22}]\label{lem:hk2}
    There is a large enough constant $C$ such that for any $d \geq 3$ and $\delta = \frac{1}{Cd\sqrt{\log d}}$,  
    for any vertex $i$ whose weight vector $w$ satisfies $\| w \|_1^2 \leq d \| w \|_2^2$, it holds that the expected local gain of a vertex $i$ satisfies:
    \[
        \EE[\Delta_i|i\in F] = \Omega\left(\frac{W_i}{d\sqrt{\log d}}\right).
    \]
\end{lemma}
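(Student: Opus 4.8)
\emph{Approach.} The plan is to show that the proof of \cite{HsiehK22}[Lemma~8] goes through essentially verbatim once the role of the maximum-degree bound is replaced by the hypothesis $\|w\|_1^2 \le d\|w\|_2^2$. Two observations make this clean. First, $\Delta_i = \big(w(B_i) - w(C_i) - w(D_i)\big)^+$ and the target bound are both $1$-homogeneous in the weight vector $w$, so one may normalise (say $\|w\|_2 = 1$) and it then suffices to prove the bound $\Omega\big(1/(d\sqrt{\log d})\big)$ under the single constraint $\|w\|_1 \le \sqrt{d}$. Second, the entire randomised scaffolding — the random direction $g$, the event $\{i \in F\}$, the hyperplane labels $\hat x$, and the partition of $N_i$ into $B_i$, $C_i$, $D_i$ — depends only on the SDP solution $\hat v$ and on $g$, and not on $w$ at all; the weights enter only in the final computation of the gain. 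So the probabilistic part of \cite{HsiehK22} is weight-oblivious and we only need to re-examine their expected-gain estimate.

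\emph{Structure of the argument we import.} Recalling their proof: (i) by \Cref{lem:hk1}, $\Pr[i \in F] = \Theta(\delta)$; (ii) conditioned on $\{i\in F\}$ one decomposes $g$ into its component along $\hat v_i$ and the orthogonal part, and uses the triangle inequalities together with $|\langle \hat v_i, g\rangle| \le \delta$ to argue that $\hat x_i$ is, up to a negligible bias, an unbiased $\pm 1$ coin that is (almost) independent of the labels $(\hat x_j)_{j\in N_i\setminus F}$; (iii) writing $T := \sum_{j\in N_i\setminus F} w_j\,\hat x_j$, this yields
\[
  \EE\big[\Delta_i \,\big|\, i\in F\big] \;\ge\; \Omega(1)\cdot \EE\Big[\big(|T| - w(D_i)\big)^{+} \,\Big|\, i\in F\Big];
\]
(iv) a second-moment / anti-concentration estimate for $T$ — whose only inputs are the correlations $\EE[\hat x_j\hat x_k]$, governed by the SDP triangle inequalities, together with the scalar $\|w\|_1$ — lower-bounds the right-hand side by $\Omega\big(\|w\|_2^2 / (\|w\|_1\sqrt{\log d})\big)$, after separately checking that $\EE[w(D_i)\mid i\in F]$ is of strictly smaller order (using $\Pr[j\in F] = O(\delta)$ from \Cref{lem:hk1} and, again, the triangle inequalities, with the $w_j$ simply carried along); and (v) finally $\|w\|_2^2/\|w\|_1 \ge \|w\|_1/d = W_i/d$ by hypothesis, so plugging in $\delta = 1/(Cd\sqrt{\log d})$ gives $\EE[\Delta_i\mid i\in F] = \Omega\big(W_i/(d\sqrt{\log d})\big)$.

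\emph{Main obstacle.} The only genuine work is auditing step~(iv): one must verify that \cite{HsiehK22}'s anti-concentration bound for the (correlated) weighted sign-sum $T$, and their control of the fence contribution $w(D_i)$, use the instance only through (a) the pairwise inner products of the SDP vectors, which are unchanged, and (b) the $\ell_1$-to-$\ell_2$ ratio $\|w\|_1^2/\|w\|_2^2$, which they bound by the maximum degree and which we bound by $d$ via the hypothesis — and in particular do not implicitly use that $w$ has support size $\le d$ or that each $w_j \le 1$. Concretely, every ``$\deg(i)$'' in their calculation should be read as $\|w\|_1$ and every ``$\sum_j A_{ij}^2 = \deg(i)$'' as ``$\sum_j w_j^2 = \|w\|_2^2 \ge \|w\|_1^2/d$'', after which their Cauchy--Schwarz and variance manipulations are weight-oblivious and go through unchanged. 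A minor secondary point is threading the (possibly non-uniform) weights $w_j$ through the $B_i/C_i/D_i$ bookkeeping in the definition of $\Delta_i$, which is purely mechanical.
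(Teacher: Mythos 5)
Your proposal is correct and takes essentially the same route as the paper: the paper's proof is exactly the observation that the only place Hsieh--Kothari use the degree bound is the inequality $\|w\|_1^2 \le d\,\|w\|_2^2$ (at the end of their Lemma 7), so the hypothesis can be substituted directly. Your more detailed audit of where the degree enters (reading $\deg(i)$ as $\|w\|_1$ and $\sum_j A_{ij}^2$ as $\|w\|_2^2$) simply elaborates that same one-line observation.
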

\begin{proof}
    In~\cite{HsiehK22}, the only place where the degree bound $d$ is used is $\| w \|_1^2 \leq d \| w \|_2^2$ at the end of the proof of Lemma 7.
\end{proof}

We are now ready to prove that the FKL algorithm establishes \Cref{thm:narrow}.

\begin{proof}[Proof of \Cref{thm:narrow}]
    Note that the value of the cut $X$ exceeds that of $\hx$ by $\sum_{i\in F'} \Delta_i$, i.e.,
    \begin{align*}
        \EE[\ip{X, LX}] 
        &=  \EE[\ip{\hx, L\hx}] + \sum_{i\in [n]} \EE[\Delta_i|i\in F]
        \cdot \Pr[i\in F] \\
        &\ge \EE[\ip{\hx, L\hx}] + \sum_{i: \Delta\mbox{-narrow}} \EE[\Delta_i|i\in F] \cdot \Pr[i\in F].
      \end{align*}
      
    Let the approximation factor of the cut $\hx$ output by the Goemans-Williamson algorithm be denoted $\alphaGW$
    and let $\opt$ be the size of the maximum cut. Then,
    \[
        \EE[\ip{\hx, L\hx}] \ge \alphaGW \cdot \opt.
    \]
    From \Cref{lem:hk1,lem:hk2} with $d = \Delta/\eta^2$, we get 
    \[
        \EE[\ip{X, LX}] 
        \ge 
        \alphaGW\cdot \opt + \Omega\left(
        \frac{1}{(\Delta/\eta^2) \sqrt{\log (\Delta/\eta^2)}} \cdot
        \sum_{i: \Delta\mbox{-narrow}} \frac{W_i}{(\Delta/\eta^2)\sqrt{\log (\Delta/\eta^2)}}\right).
    \]
    Since $\sum_{i: \Delta\mbox{-narrow}}W_i \ge \eta W \geq 2 \eta \cdot \opt$, we get 
    \[
        \EE[\ip{X, LX}] 
        \ge (\alphaGW + \tilde{\Omega}(\eta^5/\Delta^2) ) \cdot \opt.\qedhere
    \]
\end{proof}

\subsection{Wrapping up: Proof of \Cref{thm:gen-degree}}

For $\Delta$-wide graphs, \Cref{thm:high-degree} returns a cut with
value at least
\[ \opt - (2\eta + \eps')W \] 
with probability $0.98$. Since we can
always find a cut of value $\alphaGW \cdot \opt$, and $\opt \geq W/2$, this
means the expected cut value is at least
\[ \big[ 0.98 \cdot (1 - 6\eta - 2\eps') + 0.02 \cdot \alphaGW \big]
  \opt. \] And for $\Delta$-narrow graphs, \Cref{thm:narrow} finds
a cut with expected value
\[ \big[ \alphaGW + \tilde\Omega(\eta^5/\Delta^2)\big] \cdot \opt. \]
Moreover, recall that $\Delta = O(1/(\eps\eps')^2)$. Setting 
$\eta, \eps'$ to be suitably small universal constants gives us that
both the above approximation factors are at least $\alphaGW +
\tilde{\Omega}(\eps^4)$, which proves~\Cref{thm:gen-degree}.

\eat{
\subsection{Putting things together: Arbitrary Degrees}
Given $\eps$, we set the threshold $d = \Omega(1/\eps^2)$. If $\tau(d) \geq 0.01$, then we run the FKL algorithm and by \Cref{thm:low-degree}, the approximation factor is $\alpha_{\rm GW} + \tilde{\Omega}(1/d^2) = \alpha_{\rm GW} + \tilde{O\Omega}(\eps^4)$. 

Otherwise, we remove all vertices of degree at most $d$ simultaneously to form a new graph $H$. Since $\tau(d) \le 0.01$, this removes at most $0.01 m$ edges. Let $X$ be the set of vertices in $H$.  Since every vertex $i\in X$ had degree at least $d$ in the original
graph, $|X| \le 2m/d$. Let $d^H_i$ denote the degree of vertex $i\in X$ in the new graph $H$. Then, the average degree of vertices in $H$ is given by
\[
    \frac {1}{|X|}\cdot  \sum_{i\in X} d^H_i \ge \frac{d}{2m}\cdot  (2m - 2\cdot 0.01 m) = 0.99 d,
\]
where we used the facts that at most $0.01 m$ edges were removed and $|X| \le 2m/d$. 

Next, we sequentially remove vertices of degree less than $0.01 \cdot 0.99d$. The total number of edges that can be removed is at most $0.01\cdot 0.99d \cdot |X|\le 0.01\cdot 0.99d \cdot (2m/d) \le 0.02 m$. This ensures that the resulting graph has minimum degree at least $\Omega(d)$ while still retaining at least $0.97 m$ edges. If $d = \Omega(1/\eps^2)$ so that we can apply \Cref{thm:high-degree} with $\eps' = 0.01$, we can get a cut of size at least $\opt - 0.04 m \geq 0.96\cdot \opt$. 

This completes the proof of \Cref{thm:gen-degree}.
}

\eat{
\subsection{Unifying Relaxation (for uniform $\eps$)?}

\alert{ Euiwoong apr 26: currently cannot capture low-degree
  cases...}\agnote{Drop this section?}
Given $\hr$, we can write down the typical SDP with vectors variables $v_i$ for each $i \in V$ and $v_0$ indicating $+1$ so that 
\[
\max_{(i, j) \in E} (1 - \langle v_i, v_j \rangle)/2
\]
subject to 
\[
\| v_i \|_2^2 = 1, \qquad \forall i \in V
\]
combined with the LP constraints 
\[
x_i = \langle v_0, v_i \rangle, \qquad \forall i \in V
\]
and
\[
\| \hr - Ax \|_1 \leq \eps' m. 
\]

But then it is not clear whether this is a relaxation for the low-degree case, because we do not know whether $\| \hr - Ax \|_1$ will be satisfied by the (approximate) optimal solution...

\paragraph{Possible gap instance.} 
Take a $0.878$-gap instance for SDP. Let $c, s \in (0, 1)$ be the SDP and the optimal values respectively so that $s/c \approx 0.878$. Given random labels and $\hr$, $\| \hr - Ax \|_1$ is the only constraint that might not be satisfied by the original SDP solution. 

Consider $(1-\eps)$ times this solution plus $\eps$ times the integral solution corresponding to the given labels. Because $\hr$ is computed based on the given labels, assuming that the original SDP solution has $\langle v_0, v_i \rangle = 0$ for every $i \in V$ (all the known solutions do), we can ensure $\hr = Ax$ even exactly. And the SDP value will decrease by at most $\eps$, so the gap will remain $0.878 + O(\eps)$. 



\subsection{Open Problems}

\alert{Extend this to the case where the estimates are \underline{at least} $\eps$ correct.}

\alert{What if regular graphs, and then $\eps$ is the average bias over $\nf12$?}
} 




\section{MaxCut in the Partial Prediction Model}
\label{sec:max-cut-partial}

We now consider the partial prediction model, where each vertex pairwise-independently reveals their correct label with probability $\eps$. Since an $\eps^2$ fraction of the edges are induced by the vertices with the given labels, it is easy
to get an approximation ratio of almost $\alphaGW + \Omega(\eps^2)$.

\begin{theorem}\label{thm:partial-info-gw}
    Given noisy predictions with a rate of $\eps$,
    there is a polynomial-time randomized algorithm that obtains 
    an (expected) approximation factor of $\alphaGW +\eps^2$
    for the \MC problem
\end{theorem}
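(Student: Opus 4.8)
The plan is to combine Goemans--Williamson rounding with the revealed labels in the natural way: pin the labels of the revealed vertices to agree with $Y$ (equivalently, with $x^*$), count the edges lying \emph{inside} the revealed set at full value, and apply hyperplane rounding to everything else. Write $S := \{i : Y_i \neq 0\}$, $S^+ := \{i : Y_i = 1\}$, $S^- := \{i : Y_i = -1\}$; since $Y_i = x^*_i$ for every $i \in S$, the pair $(S^+, S^-)$ is exactly the partition of $S$ induced by the fixed optimal cut $x^*$. First I would solve the usual Goemans--Williamson SDP for $G$, augmented with an anchor unit vector $v_0$ and the linear constraints $\langle v_i, v_0\rangle = 1$ for $i \in S^+$ and $\langle v_i, v_0\rangle = -1$ for $i \in S^-$; on unit vectors these force $v_i = x^*_i v_0$ for all $i \in S$, and the program still has polynomial size. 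The one-dimensional embedding $v_i := x^*_i v_0$ extending $x^*$ is feasible, so the SDP optimum $\sdp$ satisfies $\sdp \ge \opt$. Moreover $\langle v_i, v_j\rangle = x^*_i x^*_j$ for $i, j \in S$, so the edges with both endpoints in $S$ contribute to the objective exactly $c_S := \sum_{\{i,j\}\in E,\ i,j \in S,\ x^*_i \neq x^*_j} w_{ij}$ (the weight of $G[S]$ cut by $x^*$), whence the other edges contribute at least $\opt - c_S \ge 0$.

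Then I would round with a single uniform random hyperplane $g$, setting $X_i := \sign\langle v_i, g\rangle$. The whole of $S^+$ gets one label and the whole of $S^-$ the other, so, up to a global sign flip that does not change the cut value, $X_i = x^*_i$ for every $i \in S$. Consequently every edge inside $S$ is cut by $X$ precisely when it is cut by $x^*$, contributing exactly $c_S$, while each remaining edge $\{i,j\}$ satisfies the standard per-edge Goemans--Williamson bound $\Pr_g[X_i \neq X_j] \ge \alphaGW \cdot \tfrac{1 - \langle v_i, v_j\rangle}{2}$. Summing, $\E_g[\mathrm{cut}(X)] \ge c_S + \alphaGW(\opt - c_S) = \alphaGW \cdot \opt + (1-\alphaGW)\, c_S$.

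Finally I would average over the randomness of the prediction, i.e., over which vertices get revealed. Pairwise independence gives $\Pr[i \in S \text{ and } j \in S] = \eps^2$ for every pair $i \neq j$, so $\E_S[c_S] = \eps^2 \sum_{\{i,j\}\in E,\ x^*_i \neq x^*_j} w_{ij} = \eps^2 \cdot \opt$. Hence $\E[\mathrm{cut}(X)] \ge \big(\alphaGW + (1-\alphaGW)\eps^2\big)\,\opt$, an $\alphaGW + \Omega(\eps^2)$ approximation (and, unlike the SDP-rounding arguments used later, this Goemans--Williamson-only argument incurs no $o(1)$ loss).

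There is no genuine difficulty; the points to be careful about are minor: (i) that pinning $v_i = x^*_i v_0$ on $S$ keeps $x^*$ feasible, so that the SDP optimum can be compared against the full $\opt$ rather than a sub-instance; (ii) that the $\alphaGW$ loss is charged only to edges leaving $S$, while the (expected $\eps^2$-fraction of) edges inside $S$ are recovered at full value --- this ``free'' gain on the in-expectation-dense piece $G[S]$ is what produces the improvement; and (iii) that $\E_S[c_S] = \eps^2 \opt$ uses only pairwise independence of the reveals. One can avoid SDP side-constraints entirely by instead contracting $S^+$ and $S^-$ into two super-vertices (summing parallel edge weights, discarding loops) and running Goemans--Williamson on the resulting graph with the two super-vertices forced apart; the analysis is identical.
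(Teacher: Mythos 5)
Your proposal is correct and is essentially the paper's own proof: pin the revealed vertices' vectors to $\pm v_0$ in the GW SDP (so $x^*$ remains feasible and $\sdp \ge \opt$), observe that hyperplane rounding cuts every optimal edge inside the revealed set with certainty while all other edges retain the per-edge $\alphaGW$ guarantee, and use pairwise independence to get expected revealed-edge weight $\eps^2\opt$, yielding $(\alphaGW + (1-\alphaGW)\eps^2)\opt$ exactly as in the paper.
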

\begin{proof}
Given a graph $G = (V, E)$ with the optimal cut $(A^*, B^*)$ that cuts $E^* = E \cap E(A^*, B^*)$, let $S$ be the set of vertices whose label is given, and let $A = A^* \cup S$, $B = B^* \cup S$. 
Consider the following \MC SDP that fixes the vertices with the revealed labels. 
\begin{align*}
\max_{v_i\in S_n~\forall i\in [n]} & \sum_{i, j\in [n]} \frac{A_{i,j}(1-\ip{v_i, v_j)}}{2} \quad s.t.\ v_i = v_0 \ \forall i \in A\mbox{ and } v_i = -v_0 \ \forall i \in B.
\end{align*}
Note that this is still a valid relaxation so the optimal SDP value $\sdp$ is at least $\opt$.
For each edge $e \in E^*$, $e \in E(A, B)$ with probability $\eps^2$; in other words, both of its endpoints will reveal their labels. Let $\tau$ denotes the total weight of such edges, so that $\E[\tau] = \eps^2 \opt$. Note that $\sdp \geq \opt$ for every partial prediction. 

Perform the standard hyperplane rounding. For each $e \in E^* \cap E(A, B)$, the rounding will always cut $e$. For all other edges, we have an approximation ratio of $\alphaGW$. Therefore, the expected weight of the cut edges is at least 
\[
\E[\tau W + \alphaGW(\sdp - \tau W)]  
\geq \eps^2 \opt + \alphaGW(1 - \eps^2)\opt =
(\alphaGW + (1 - \alphaGW)\eps^2)\cdot \opt. \qedhere
\]
\end{proof}

Is $\Omega(\eps^2)$ optimal? Ideally, we could get an $\Omega(\eps)$-advantage if the hyperplane rounding performs better than $\alphaGW$ for the edges with only one endpoint's label revealed. One naive way to achieve this is to hope that the rounding {\em preserves the marginals}; i.e., $\E[x_i] = \langle v_0, v_i \rangle$ for all $i \in [n]$. In that case, if we consider $(i, j)$ where if $v_i = \pm v_0$, the probability that $(i, j)$ is cut is exactly their contribution to the SDP $(1 - \langle v_i, v_j \rangle)$. 

Unfortunately, the hyperplane rounding does not satisfy this property. Instead, we use the rounding scheme developed by Raghavendra and Tan~\cite{RT12} for max-bisection that has an approximation ratio $\alphaRT \approx 0.858$ while preserving the marginals. 


\PartialThm*

\begin{proof}
Given a graph $G = (V, E)$ with the optimal cut $(A^*, B^*)$ that cuts $E^* = E \cap E(A^*, B^*)$, let $S$ be the set of vertices whose label is given, and let $A = A^* \cup S$, $B = B^* \cup S$. 
Let $E'$ be the set of the edges that are incident on $A$ \underline{\emph{or}} $B$. 
Each edge cut in the optimal solution will be in $E'$ with probability $2\eps - \eps^2$. Let $\tau$ be the total weight of the edges in $E^* \cap E'$ so that $\E[\tau] = (2\eps - \eps^2)\opt$. Guess the value of $\tau$ (up to a $o(1)$ multiplicative error that we will ignore in the proof), and consider the following \MC SDP that fixes the vertices with the revealed labels and requires a large SDP contribution from $E'$. 
\begin{align*}
\max_{v_i\in S_n~\forall i\in [n]}\ & \sum_{i, j\in [n]} \frac{A_{i,j}(1-\ip{v_i, v_j)}}{2} \\
s.t.\ & v_i = v_0 \ && \forall i \in A \\
& v_i = -v_0 \ && \forall i \in B  \\
& \sum_{(i, j) \in E'} \frac{A_{i, j}(1-\ip{v_i, v_j)}}{2} \geq \tau. 
\end{align*}

Given the correctly guessed value of $\tau$,  the optimal solution is still feasible for the above SDP, so $\sdp \geq \opt$. 
We use Raghavendra and Tan~\cite{RT12}'s rounding algorithm, which is briefly recalled below. 

\begin{itemize}
    \item For each $i \in [n]$, define $\mu_i \in [-1, +1]$ and $w_i \in \mathbb{R}^n$ such that $v_i = \mu_i v_0 + w_i$ and $w_i \perp v_0$. Let $\overline{w_i} = w_i / \| w_i \|$. ($w_i = 0$ if and only if $v_i = \pm v_0$. Then define $\overline{w_i} = 0$.)
    
    \item Pick a random Gaussian vector $g$ orthogonal to $v_0$. Let $\xi_i := \langle g, \overline{w_i} \rangle$. Note that each $\xi_i$ is a standard Gaussian.

    \item Let the threshold $t_i := \Phi^{-1}(\mu_i/2 + 1/2)$ where $\Phi$ is the CDF of a standard Gaussian. 

    \item If $\xi_i \leq t_i$, set $x_i = 1$ and otherwise set $x_i = -1$.
\end{itemize}
    Raghavendra and Tan showed that this rounding achieves an $(\alphaRT \approx 0.858)$-approximation for every edge. 
    
    Consider an edge $(i, j) \in E'$ and without loss of generality, assume $i \in B$, which implies that $v_i = -v_0$. The contribution of this edge to the SDP objective is $\mu_j/2 + 1/2$. Note that $\Pr[x_j = 1]$ is exactly $\mu_j/2 + 1/2$ and $\E[x_j] = (\mu_j/2 + 1/2) - (1/2 - \mu_j/2) = \mu_j$.
    So, we get a $1$-approximation from this edge. Since other edges still have an $\alphaRT$-approximation, the total expected weight of the edges cut is at least 
    \[
    \E[\tau + 
    \alphaRT(\sdp - \tau)]
    \geq (2\eps - \eps^2) \opt + \alphaRT (1 - (2\eps - \eps^2)) \opt 
    = \alphaRT \cdot \opt + 
    (1 - \alphaRT)(2\eps - \eps^2)\opt.
  \]
  Hence the proof of~\Cref{thm:partial-info-rt}.
\end{proof}



\section{2-CSPs in the Noisy Prediction Model}\label{section:csps}
In this section we extend \cref{thm:high-degree} to general $2$-CSPs.

We start by introducing some additional notation and definitions.
For a multi-index $\alpha\in [n]^2$ we denote by  $\alpha(i)$ its $i$-th index. For variables $x_1,\ldots,x_n\,,$ we then write $\chi_\alpha(x)$ for the monomial $\prod_{i\in \alpha}x_i\,.$
Given a predicate $P:\{-1,+1\}^2\rightarrow\{0,1\}\,,$ an instance $\cI$ of the CSP(P) problem over variables $x_1,\ldots, x_n$ is a multi-set of triplets $(w,c,\alpha)$ representing constraints of the form $P(c\circ x^\alpha)=P(c_1x_{\alpha(1)},c_2x_{\alpha(2)})=1$ where $\alpha\in [n]^2$ is the scope, $c\in \{\pm 1\}^2$ is the negation pattern and $w\geq 0$ is the weight of the constraint. We let $W=\sum_{(w,c,\alpha)\in \cI}w\,.$ We can represent the predicate $P$ as the multilinear polynomial of degree $2$ in indeterminates $x_{\alpha(1)},x_{\alpha(2)}\,,$
\begin{align*}
    P(c\circ x^\alpha)=\sum_{\alpha'\subseteq \alpha} c^\alpha \cdot \hat{p}(\alpha')\cdot \chi_{\alpha'}(x)\,,
\end{align*}
where $\hat{p}(\alpha')$ is the coefficient in $P$ of the monomial $\chi_{\alpha'}(x)\,.$ Notice that this formulation does not rule out predicates with same multi-index but different negation
pattern or multi-indices in which an index appears multiple times.
Given a predicate $P$, an instance $\cI$ of CSP(P) with $m$ constraints and $x\in \set{\pm 1}^n$ we define 
\begin{align*}
    \val_\cI(x):=\frac{1}{W}\sum_{(w,c, \alpha)\in \cI} w\cdot P(c\circ x^{\alpha})\qquad\textnormal{and}\qquad
    \opt_\cI:=\max_{x\in \set{\pm 1}^n} \val_{\cI}(x)\,.
\end{align*}
For an instance $\cI$ of CSP(P), in the noisy prediction model we assume there is some fixed assigment  $x^*$ with value $\val_\cI(x^*)=\opt_\cI\,.$ The algorithm has access to a prediction vector $Y\in \set{\pm 1}^n$ such that predictions $y_i$'s are $2$-wise independently correct  with probability $\frac{1+\eps}{2}$ for unknown bias $\eps\,.$ 
We let $Z=\frac{Y}{2\eps}\,.$ With a slight abuse of notation we also write $P(c\circ Z^{\alpha})$ even though $Z$ is a rescaled boolean vector.

For a literal $i\in [n]$ and an instance $\cI$ of CSP(P) we let $S_i:=\set{(w,c,\alpha)\in \cI \,|\, \alpha(1)=i}\,.$ 
As in \cref{sec:delta-wide-algorithm}, we can define $\Delta$-wide literals and instances. For an instance $\cI\,,$ we partition the constraints in $S_i$ into two sets: the {\em $\Delta$-prefix} for
$i$ comprises the $\Delta$ heaviest constraints in $S_i$ (breaking
ties arbitrarily), while the remaining constraints make up the {\em
  $\Delta$-suffix} for $i$, which we denote by $\tilde{S}_i$. We fix a parameter $\eta \in
(0,\nf12)$. 
We let $W_i=\sum_{(w,c,\alpha)\in \cS_i}w_i\,.$

\begin{defn}[$\Delta$-Narrow/Wide]
  A literal $i$ is {\em $\Delta$-wide} if the total weight of its in
  its $\Delta$-prefix is at most $\eta W_i$, and so the weight of
  edges in its $\Delta$-suffix is at least $(1-\eta) W_i$. Otherwise,
  the literal $i$ is {\em $\Delta$-narrow}. An instance $\cI$ of CSP(P) is \emph{$\Delta$-wide} if $\sum_{\substack{i \in [n]\\\Delta\textnormal{-wide}}} W_i\ge (1-\eta) W$.
\end{defn}

We are now ready to state the main theorem of the section.

\begin{theorem}\label{thm:csps-high-degree}
    Let $P:\set{\pm 1}^2\rightarrow \set{0,1}$ be a predicate.
    Let $\eps' \in (0,1)\,,$ $\eta\in (0,1/2)$ and $\Delta\geq O(1/(\eps'\cdot\eps)^2)$.
    There exists a polynomial-time randomized algorithm that, given a $\Delta$-wide $\cI$ in CSP(P) and noisy predictions with bias $\eps$,
    returns a vector $\hat{x}\in \set{\pm 1}^n$ satisfying
    \begin{align*}
        \val_\cI(x)\geq \opt_\cI - 5\eta -O\Paren{\eps'}\,,
    \end{align*}
    with probability at least $0.98\,.$
\end{theorem}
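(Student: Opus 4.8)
The plan is to follow the template of the proof of \Cref{thm:high-degree} essentially verbatim, replacing the quadratic form $\ip{x,Ax}$ by the CSP objective $\sum_{(w,c,\alpha)} w\cdot P(c\circ x^\alpha)$ and using the Fourier expansion of $P$ to reduce everything to linear and constant terms. Write $P(c\circ x^\alpha) = \hat p(\emptyset) + \hat p(\{1\}) c_1 x_{\alpha(1)} + \hat p(\{2\}) c_2 x_{\alpha(2)} + \hat p(\{1,2\}) c_1 c_2 x_{\alpha(1)} x_{\alpha(2)}$. The constant term contributes a fixed additive amount to every assignment and can be ignored; the linear terms can be absorbed into the objective vector; and the only genuinely ``quadratic'' piece is $\sum_{(w,c,\alpha)} w\, \hat p(\{1,2\})\, c_1 c_2\, x_{\alpha(1)} x_{\alpha(2)}$, which is exactly a (weighted, signed) \MC-type quadratic form $\ip{x, A'x}$ for an appropriate matrix $A'$ with $\|A'_i\|_1 \le W_i$ (up to the constant $|\hat p(\{1,2\})| \le O(1)$). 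So the whole argument of \S\ref{sec:delta-wide-algorithm} goes through on $A'$.

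\textbf{Step 1 (estimating imbalances).} For each $\Delta$-wide literal $i$, define the truncated matrix $\tilde A'$ by zeroing out the $\Delta$-prefix constraints in $S_i$, and set $\hat r_i := \frac{1}{2\eps}(\tilde A' Y)_i = (\tilde A' Z)_i$ for $\Delta$-wide $i$ and $\hat r_i := 0$ otherwise, where the linear-in-$x$ contributions of $P$ are also folded in. Since $\EE[Z]=x^*$ and the $Z_j$ are pairwise independent, \Cref{lem:r-values-ok} carries over unchanged: the narrow literals contribute $\le \eta W$ deterministically (by $\Delta$-wideness of $\cI$), the prefix-truncation bias is $\le \eta W_i$ per wide literal (by $\Delta$-wideness of $i$), and Chebyshev on $(\tilde A' Z)_i$ with $\sum_j (\tilde A'_{ij})^2 \le \|\tilde A'_i\|_1 \|\tilde A'_i\|_\infty \le W_i \cdot (W_i/\Delta)$ gives a per-coordinate deviation of $O(W_i/(\eps\sqrt\Delta))$. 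Summing and applying Markov, with $\Delta = \Omega(1/(\eps\eps')^2)$ we get $\|\hat r - r^*\|_1 \le (\eps' + 2\eta)W$ with probability $\ge 0.99$, where $r^* := A' x^*$ (plus linear terms).

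\textbf{Step 2 (LP and rounding).} Solve $\min_{x\in[-1,1]^n}\ip{\hat r, x}$ subject to $\|\hat r - A'x\|_1 \le (\eps'+2\eta)W$; call the optimum $\hat x$. When the good event of Step 1 holds, $x^*$ is feasible, so $\ip{\hat r,\hat x} \le \ip{\hat r, x^*} \le \ip{r^*,x^*} + (\eps'+2\eta)W$. Then independently round $\hat x$ to $X\in\{\pm1\}^n$ with $\Pr[X_i=1]=(1+\hat x_i)/2$, repeating $O(1/\eta)$ times and keeping the best. Because $\hat p(\{1,2\}) c_1 c_2 x_{\alpha(1)} x_{\alpha(2)}$ is \emph{multilinear} (no $x_i^2$ terms, since even when $\alpha(1)=\alpha(2)$ we have $x_i^2=1$, a constant we fold away), the quadratic form $\ip{x,A'x}$ is multilinear in each coordinate, so $\EE[\ip{X,A'X}] = \ip{\hat x, A'\hat x}$ exactly. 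Chaining the three estimates exactly as in the proof of \Cref{lem:algo-value} yields $\EE[\text{val}_\cI(X)\cdot W] \ge \opt_\cI \cdot W - (2\eps'+4\eta)W$ in expectation, and the Markov-over-repetitions argument (using $\ip{X,A'X}\in[-W,W]$, equivalently $\text{val}_\cI(X)\in[0,1]$) boosts this to $\text{val}_\cI(\hat x) \ge \opt_\cI - 5\eta - O(\eps')$ with probability $0.98$.

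\textbf{Main obstacle.} The one place that requires care is the handling of \emph{repeated indices} and \emph{multiple constraints with the same scope but different negation patterns}, which the CSP formulation explicitly permits: I must verify that folding the degree-$0$ and degree-$1$ Fourier parts into a constant and into the objective vector is legitimate (it is, since $\sum_j |\hat p(\{1\}) c_1| w \le O(W_i)$), and that the ``quadratic part'' matrix $A'$, after symmetrization and absorbing the $|\hat p(\{1,2\})| = O(1)$ factor, still satisfies $\|A'_i\|_1 = O(W_i)$ and inherits the $\Delta$-prefix/suffix structure — so the norm bounds $\|\tilde A'_i\|_\infty \le O(W_i/\Delta)$ and $\|\tilde A'_i\|_1 \le O(W_i)$ used in Step 1 hold up to absolute constants, which is all that is needed. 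The constants $5\eta$ and $O(\eps')$ in the statement already absorb these $O(1)$ factors, so no new ideas are required beyond a careful bookkeeping of the Fourier decomposition.
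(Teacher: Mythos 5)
Your proposal is correct and follows essentially the same route as the paper's proof in \Cref{sec:missing-proofs}: there too the 2-CSP objective is expanded via the Fourier coefficients of $P$ so that, after substituting $Z=\nicefrac{Y}{2\eps}$ for the partner variables of $\Delta$-wide literals (with the $\Delta$-prefix truncated), one gets a linear program with an $\ell_1$-style consistency constraint, feasibility of $x^*$ is shown exactly as in \Cref{lem:r-values-ok} via pairwise independence, Chebyshev and a peeling argument (\Cref{lem:csp-concentration}), and independent rounding repeated $O(1/\eta)$ times with a Markov argument finishes as in \Cref{lem:algo-value}. The only differences are bookkeeping: you package the degree-2 part as an explicit signed matrix $A'$ and keep the degree-1 terms as deterministic linear objective coefficients (the paper instead also replaces the partner's degree-1 term by $Z$), which changes nothing substantive in the analysis.
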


The proof of \cref{thm:csps-high-degree} follows closely that of \cref{thm:high-degree}. 
First observe that we may assume without loss of generality that each $(w,c,\alpha)$ appears exactly twice in $\cI\,.$
This is convenient so that for all $x\in \set{\pm 1}^n\,,$ $\val_\cI(x)=\sum_{i\in [n]}\sum_{(w,c,\alpha)\in S_i} w\cdot P(c\circ x^\alpha)\,.$
With a slight abuse of notation, for all $(w,c,\alpha)\in S_i\,,$ we let
\begin{align*}
    P(c\circ (x_i\cdot Z^{\alpha\setminus i})):=\sum_{\substack{\alpha'\subseteq \alpha\textnormal{ s.t.}\\ \alpha'(1)=i}} \hat{p}_{\alpha'}c^{\alpha'} x_i\cdot \chi_{\alpha'\setminus \alpha'(1)}(Z) + \sum_{\substack{\alpha'\subseteq \alpha\textnormal{ s.t.}\\\alpha'(1)\neq i}} \hat{p}_{\alpha'}c^{\alpha'} \chi_{\alpha'}(Z)\,,
\end{align*}
and
\begin{align*}
    P(c\circ x^{\alpha \setminus i}):=\sum_{\substack{\alpha'\subseteq \alpha\textnormal{ s.t.}\\ \alpha'(1)=i}} \hat{p}_{\alpha'}c^{\alpha'}  \chi_{\alpha'\setminus \alpha'(1)}(x) + \sum_{\substack{\alpha'\subseteq \alpha\textnormal{ s.t.}\\\alpha'(1)\neq i}} \hat{p}_{\alpha'}c^{\alpha'} \chi_{\alpha'}(x)\,,
\end{align*}
We further define $\tilde{S}_i\subseteq S_i$ to be subset of constraints in $S_i$ that are not part of the $\Delta$-prefix of $i\,.$
We can now state the algorithm behind \cref{thm:csps-high-degree}, which amounts to the following two steps.
\begin{enumerate}
    \item Solve the linear program 
    \begin{align*}
        \max_{x\in [-1,+1]^n}\sum_{\substack{i\in [n]\\ \Delta\textnormal{-wide}}}\sum_{(w,c,\alpha)\in \tilde{S}_i} w P(c\circ (x_i\cdot Z^{\alpha\setminus i}))
    \end{align*}
    subject to
    \begin{align}
        \sum_{\substack{i\in[n]\\\Delta\textnormal{-narrow}}}&\left|\sum_{(w,c,\alpha)\in S_i}w P(c\circ x^{\alpha \setminus i})\right|+\sum_{\substack{i\in[n]\\\Delta\textnormal{-wide}}}\left|\sum_{(w,c,\alpha)\in S_i\setminus \tilde{S}_i}w P(c\circ x^{\alpha \setminus i})\right|\nonumber\\
        +\sum_{\substack{i\in[n]\\\Delta\textnormal{-wide}}}&\left| 
        \sum_{(w,c,\alpha)\in \tilde{S}_i}w \Paren{P(c\circ x^{\alpha \setminus i})-P(c\circ Z^{\alpha \setminus i})}\right|\leq  C (\eps'+2\eta)W\label{eq:csp-lp-constraint}
    \end{align}
    for some large enough absolute constant $C>0\,.$ Let $\hat{x}\in\brac{-1,+1}^n$ be the found optimal solution.
    \item Repeat $O(1/\eta)$ times independently and output the best assignment $X^*\,:$ independently for each $i\in [n]$ set $X_i=1$ with probability $(1+\hat{x}_i)/2$ and $X_i=-1$ otherwise.
\end{enumerate}

The LP above generalize the one in \cref{eq:4}, which comes as a special case where $\hat{p}_{\alpha'}=0$ for all $\alpha'\subset \alpha\in[n]^2\,.$ Indeed, since predicates contain only two literals, the program is  linear. 
Given the resemblance between \cref{eq:4} and \cref{eq:csp-lp-constraint}, the proof of \cref{thm:csps-high-degree} follows closely that of \cref{thm:high-degree}, we defer it to \cref{sec:missing-proofs}.

\section{Closing Remarks}

Our work suggests many directions for future research. One immediate
question is to quantitatively improve our results: e.g., the exponent
of $\eps$ for noisy predictions, and the constants. Here are some
broader questions:

\begin{enumerate}

\item 
  We assume that our
  noisy predictions are correct with probability \emph{equal to} $\nf12 + \eps$; we
  can easily extend to the case where each node has a prediction that
  is correct with some probability $\nf12 + \eps_i$, and each $\eps_i
  \in \Theta(\eps)$. But our approach breaks down when different nodes
  are correct with wildly different probabilities, even when we are guaranteed $\eps_i \geq \eps$ for every $i$. Can we extend to
  that case?  

\item For which other problems can we improve the performance of the
  state-of-the-art algorithms using noisy predictions? As we showed, the
  ideas used for the $\Delta$-wide case extend to more general
  maximization problems on $2$-CSPs with ``high degree'', but can we
  extend the results for the ``low-degree'' case where each variable
  does not have a high-enough degree to infer a clear signal? Can we
  extend these to minimization versions of $2$-CSPs?
  
\item What general lower bounds can we give for our prediction models?

We feel that $\alpha_{GW} + O(\eps)$ is a natural barrier. One ``evidence'' we have is the following integrality gap for the SDP used in the partial information model; starting from a gap instance and an SDP solution exhibiting $\opt \leq \alpha_{GW} \cdot \sdp$ for the standard SDP (without incorporating revealed information), given labels for an $\eps n$ vertices, our new SDP simply fixes the positions of the corresponding $\eps n$ vectors, but doing that from the given SDP solution decreases the SDP value by at most $O(\eps)$ in expectation, which still yields $\opt \leq (\alpha_{GW} + O(\eps)) \sdp$. (Note that you can replace the SDP gap with any hypothetical gap instance for stronger relaxations.)

Given that the partial predictions model is easier than the noisy
predictions model and our entire algorithm for the partial model is
based on this SDP, this might be considered as a convincing lower
bound, but it would be nicer to have more general lower bounds against
all polynomial-time algorithms.

\item Our models only assume pairwise independence between vertices:
  can we extend our results to other ways of modeling correlations
  between the predictions? In addition to stochastic predictions, can we
  incorporate geometric predictions (e.g., in random graph models where
  the probability of edges depend on the proximity of the nodes)?
 
\end{enumerate}




\section*{Acknowledgments.} We thank Ola Svensson for enjoyable
discussions.

{\small
\bibliographystyle{alpha}
\bibliography{references}
}
\appendix
\section{Missing proofs for 2-CSPs}\label{sec:missing-proofs}

We obtain here the proof of \cref{thm:csps-high-degree}.
\paragraph*{Feasibility of the best assignment}\label{sec:csp-concentration}
As in \cref{lem:r-values-ok}, we first prove that, in expectation over the prediction $Y$, $x^*$ is a feasible solution to the program.

\begin{lemma}\label{lem:csp-concentration}
    Consider the settings of \cref{thm:csps-high-degree}.
    Then
    \begin{align*}
        \E \sum_{\substack{i\in[n]\\\Delta\textnormal{-narrow}}}&\left|\sum_{(w,c,\alpha)\in S_i}w P(c\circ x^{*\alpha \setminus i})\right|+\sum_{\substack{i\in[n]\\\Delta\textnormal{-wide}}}\left|\sum_{(w,c,\alpha)\in S_i\setminus \tilde{S}_i}w P(c\circ x^{*\alpha \setminus i})\right|\\
        +\sum_{\substack{i\in[n]\\\Delta\textnormal{-wide}}}&\left| 
        \sum_{(w,c,\alpha)\in \tilde{S}_i}w \Paren{P(c\circ x^{*\alpha \setminus i})-P(c\circ Z^{\alpha \setminus i})}\right|\leq  W(2\eta +O(1/\eps\sqrt{\Delta}))\,.
    \end{align*}
    \begin{proof}
        First,  by definition of $\Delta$-wide instance,$$\sum_{\substack{i\in[n]\\\Delta\textnormal{-narrow}}}\left|\sum_{(w,c,\alpha)\in S_i}w P(c\circ x^{*\alpha \setminus i})\right|\leq \eta W\,.$$
        Second, by definition for any $\Delta$-wide vertex $i$,
        \begin{align*}
            \Abs{\sum_{(w,c,\alpha)\in S_i\setminus\tilde{S}_i}wP(c\circ x^{*\alpha\setminus i})}\leq \eta W_i\,.
        \end{align*}
        Hence it remains to show
        \begin{align*}
            \E\sum_{\substack{i\in[n]\\\Delta\textnormal{-wide}}}&\left| 
        \sum_{(w,c,\alpha)\in \tilde{S}_i}w \Paren{P(c\circ x^{*\alpha \setminus i})-P(c\circ Z^{\alpha \setminus i})}\right|\leq  O(W/(\eps\sqrt{\Delta})\,.
        \end{align*}
        Now, recall that $\E[Y_i]=2\eps x^*_i$ and thus
         $\EE[Z] = x^*$. So for any $(c, \alpha)\in \cI\,,$
        $\E[P(c\circ Z^{\alpha})] = \E[P(c\circ x^{*\alpha})]$ by pair-wise independence of the predictions. 
        Thus it suffices to study, for each $\Delta$-wide $i$,
        $\Var\Paren{\sum_{(w,c,\alpha)\in S_i}w P(c\circ Z^{\alpha \setminus i})}\,.$
        To this end, notice that for any $\alpha,\alpha'\in S_i$ with $\alpha\cap\alpha'=\set{i}$ it holds
        \begin{align*}
            \E \Brac{P(c\circ Y^{\alpha\setminus i})P(c\circ Y^{\alpha'\setminus i})} =  \E \Brac{P(c\circ Y^{\alpha\setminus i})}\E \Brac{P(c\circ Y^{\alpha'\setminus i})}\,.
        \end{align*}
        Moreover, since $|\alpha|=2\,,$ there are at most $4$ distinct negation patterns. Therefore, by the AM-GM inequality
        \begin{align*}
            \Var\Paren{\sum_{(w,c,\alpha)\in \tilde{S}_i}w P(c\circ Z^{\alpha \setminus i})}
            &\leq \sum_{(w,c,\alpha)\in \tilde{S}_i} O(w^2)\Var\Paren{P(c\circ Z^{\alpha \setminus i})}\\
            &\leq \sum_{(w,c,\alpha)\in \tilde{S}_i} O\Paren{\frac{w^2}{\eps^2}}
        \end{align*}
        where we used the fact that entries of $Z$ are bounded by $1/\eps$ and the coefficients of a boolean predicate are bounded by $1$ (by Parseval's Theorem, see \cite{o2014analysis}).
        By construction of $\tilde{S}_i\,,$ each $(w,c,\alpha)\in \tilde{S}_i$ must satisfy $w\leq W_i/\Delta\,.$
        Using Holder's inequality
        \begin{align*}
            \Var\Paren{\sum_{(w,c,\alpha)\in \tilde{S}_i}w P(c\circ Z^{\alpha \setminus i})}\leq 
            O\Paren{\frac{W^2_i}{\Delta\cdot \eps^2}}\,.
        \end{align*}
        We can use this bound on the variance in combination with Chebishev's inequality to obtain, for $\lambda >0\,,$
        \begin{align*}
            \bbP \Paren{\Abs{
        \sum_{(w,c,\alpha)\in S_i}w \Paren{P(c\circ x^{*\alpha \setminus i})-P(c\circ Z^{\alpha \setminus i})}}\geq \lambda}
        &\leq O \Paren{\frac{W^2_i}{\eps^2\cdot \Delta\cdot\lambda^2}}\,.
        \end{align*}
        Let $\lambda := O(W_i/(\eps\sqrt{\Delta}))\,.$
        A peeling argument now completes the proof:
        \begin{align*}
            \E &\Brac{\Abs{
        \sum_{(w,c,\alpha)\in S_i}w \Paren{P(c\circ x^{*\alpha \setminus i})-P(c\circ Z^{\alpha \setminus i})}}}\\
        &\leq \lambda +\sum_{t\geq 0} 2^{t+1}\lambda \cdot \bbP  \Paren{\Abs{
        \sum_{(w,c,\alpha)\in S_i}w \Paren{P(c\circ x^{*\alpha \setminus i})-P(c\circ Z^{\alpha \setminus i})}}\geq 2^t \lambda}\leq O(\lambda)\,.
        \end{align*}
    \end{proof}
\end{lemma}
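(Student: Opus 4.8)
The plan is to mirror the proof of \cref{lem:r-values-ok}, splitting the left-hand side into the three groups of terms that already appear in the statement and bounding each. \textbf{The two deterministic groups.} For any literal $i$ and any constraint $(w,c,\alpha)\in S_i$, the value $P(c\circ x^{*\alpha\setminus i})$ is a signed combination of at most four Fourier coefficients of $P$ evaluated at $\pm1$ inputs, so $|P(c\circ x^{*\alpha\setminus i})|=O(1)$ because $|\hat p(\alpha')|\le 1$ for every $\alpha'$ by Parseval. Hence $\bigl|\sum_{(w,c,\alpha)\in S_i} w\,P(c\circ x^{*\alpha\setminus i})\bigr|=O(W_i)$; summing over $\Delta$-narrow $i$ and using that $\cI$ is $\Delta$-wide gives a bound of $O(\eta W)$ for the first group. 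For a $\Delta$-wide literal $i$ the constraints of $S_i\setminus\tilde S_i$ (its $\Delta$-prefix) have total weight at most $\eta W_i$, so the same estimate gives $\bigl|\sum_{(w,c,\alpha)\in S_i\setminus\tilde S_i} w\,P(c\circ x^{*\alpha\setminus i})\bigr|=O(\eta W_i)$, and summing yields $O(\eta W)$ for the second group. After rescaling, these two contribute the $2\eta W$ in the claim.

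\textbf{Reducing the third group to a variance bound.} It remains to control $\E\sum_{i\ \Delta\text{-wide}}\bigl|\sum_{(w,c,\alpha)\in\tilde S_i} w\,(P(c\circ x^{*\alpha\setminus i})-P(c\circ Z^{\alpha\setminus i}))\bigr|$. Since $Z=Y/2\eps$ has $\E[Z]=x^*$, and since (as $|\alpha|=2$) each $P(c\circ Z^{\alpha\setminus i})$ is an \emph{affine} function of a single coordinate of $Z$, linearity of expectation alone gives $\E[P(c\circ Z^{\alpha\setminus i})]=P(c\circ x^{*\alpha\setminus i})$. Thus for each $\Delta$-wide $i$ the random variable $\sum_{(w,c,\alpha)\in\tilde S_i} w\,P(c\circ Z^{\alpha\setminus i})$ is centered at the target value, and I would bound its variance, apply Chebyshev's inequality, and integrate via the same dyadic peeling argument used in the last display of \cref{lem:r-values-ok} to get $\E[\,|\cdots|\,]=O(W_i/(\eps\sqrt\Delta))$; summing over $\Delta$-wide $i$ then gives $O(W/(\eps\sqrt\Delta))$, the remaining part of the claim.

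\textbf{The variance estimate (the main obstacle).} The genuinely new step is $\Var\bigl(\sum_{(w,c,\alpha)\in\tilde S_i} w\,P(c\circ Z^{\alpha\setminus i})\bigr)=O\bigl(W_i^2/(\eps^2\Delta)\bigr)$. Each summand is affine in some coordinate of $Z$ with magnitude $O(1/\eps)$ and Fourier weight $O(1)$, so $\Var(P(c\circ Z^{\alpha\setminus i}))=O(1/\eps^2)$. For the cross terms, pairwise independence of the coordinates of $Z$ makes $\mathrm{Cov}(P(c\circ Z^{\alpha\setminus i}),P(c\circ Z^{\alpha'\setminus i}))$ vanish whenever $\alpha\cap\alpha'=\{i\}$; the only surviving covariances pair two constraints on the \emph{same} index-pair $\{i,j\}$, and for each $j$ there are at most four negation patterns (hence $O(1)$ constraint-types after merging equal-type copies), so bounding each survivor by $\tfrac12(\Var+\Var)$ via AM-GM gives $\Var(\sum w\,P(\cdot))=O(1)\sum_{(w,c,\alpha)\in\tilde S_i} w^2\,\Var(P(c\circ Z^{\alpha\setminus i}))=O(1/\eps^2)\sum_{(w,c,\alpha)\in\tilde S_i} w^2$. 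Finally every suffix constraint has $w\le W_i/\Delta$ while $\sum_{(w,c,\alpha)\in\tilde S_i} w\le W_i$, so $\sum w^2\le\|w\|_\infty\|w\|_1\le W_i^2/\Delta$, yielding the variance bound. The hard part is precisely this bookkeeping: unlike the linear form $(\tilde A Z)_i=\sum_j\tilde A_{ij}Z_j$ in \cref{lem:r-values-ok}, whose summands involve disjoint independent coordinates, here two distinct constraints can be correlated, so one must argue that the CSP structure (degree two, at most four negation patterns per index-pair) leaves only $O(1)$ correlated partners per summand before AM-GM applies. Combining this $O(W/(\eps\sqrt\Delta))$ bound with the two $O(\eta W)$ terms and absorbing constants gives $W(2\eta+O(1/\eps\sqrt\Delta))$, as required.
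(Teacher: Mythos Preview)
Your proposal is correct and follows essentially the same approach as the paper's proof: the same three-way split, the same deterministic bounds on the narrow and prefix terms, and the same variance-plus-Chebyshev-plus-peeling argument for the suffix term, with the key observation that covariances vanish unless two constraints share the non-$i$ index and that at most four negation patterns survive per such index. Your remark that $P(c\circ Z^{\alpha\setminus i})$ is affine in a single coordinate of $Z$ (so centering needs only linearity, not pairwise independence) is a slight sharpening of the paper's phrasing, but otherwise the arguments coincide.
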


\paragraph*{Analysis of the algorithm}\label{sec:csp-analysis-algorithm}
We can use \cref{lem:csp-concentration} to obtain our main theorem for CSPs.

\begin{proof}[Proof of \cref{thm:csps-high-degree}]
    We follow closely the proof of \cref{lem:algo-value}.
    Consider one of the assignments $X\in \set{\pm 1}^n$ found in the second step of the algorithm. Recall $\hat{x}\in[-1,+1]^n$ denotes the optimal fractional solution found by the algorithm.  We may rewrite for each $\Delta$-wide $i$ and $(w,c,\alpha)\in \tilde{S}_i$
    \begin{align}\label{eq:csp-expectation}
         \sum_{\substack{i\in [n]\\ \Delta\textnormal{-wide}}}\sum_{(w,c,\alpha)\in \tilde{S}_i}wP(c\circ X^{\alpha}) =  & \sum_{\substack{i\in [n]\\ \Delta\textnormal{-wide}}}\sum_{(w,c,\alpha)\in \tilde{S}_i}w\left[ P(c\circ (\hat{x}_i\cdot Z^{\alpha\setminus i})) \right.\nonumber\\
          &+\left.P(c\circ X^{\alpha})- P(c\circ \hat{x}^{\alpha})\right.\nonumber\\
         &+ \left.P(c\circ \hat{x}^{\alpha}) -P(c\circ (\hat{x}_i\cdot Z^{\alpha\setminus i}))\right]\,.
    \end{align}
    We bound  each term in \cref{eq:csp-expectation} separately.
    First, notice that by Markov's inequality and \cref{lem:csp-concentration}, with probability $0.99\,,$ $x^*$ is a feasible solution to the LP. Conditioning on this event $\cE$
    \begin{align*}
        \sum_{\substack{i\in [n]\\ \Delta\textnormal{-wide}}}\sum_{(w,c,\alpha)\in \tilde{S}_i}w P(c\circ (\hat{x}_i\cdot Z^{\alpha\setminus i}))\geq& \sum_{\substack{i\in [n]\\ \Delta\textnormal{-wide}}}\sum_{(w,c,\alpha)\in \tilde{S}_i}w P(c\circ (x^*_i\cdot Z^{\alpha\setminus i}))\\
        = & \sum_{\substack{i\in [n]\\ \Delta\textnormal{-wide}}}\sum_{(w,c,\alpha)\in \tilde{S}_i}w P(c\circ x^{*\alpha})\\
        +& \sum_{\substack{i\in [n]\\ \Delta\textnormal{-wide}}}\sum_{(w,c,\alpha)\in \tilde{S}_i}w \Paren{P(c\circ (x^*_i\cdot Z^{\alpha\setminus i})) - P(c\circ x^{*\alpha}) }
    \end{align*}
    By Holder's inequality and the fact that $x^*$ is feasible, for $\Delta$-wide $i$,
    \begin{align*}
        \sum_{\substack{i\in [n]\\ \Delta\textnormal{-wide}}}\sum_{(w,c,\alpha)\in \tilde{S}_i}w& \Paren{P(c\circ (x^*_i\cdot Z^{\alpha\setminus i})) - P(c\circ x^{*\alpha}) }\\
        \leq \sum_{\substack{i\in [n]\\ \Delta\textnormal{-wide}}}&\Abs{\sum_{(w,c,\alpha)\in \tilde{S}_i}w \Paren{P(c\circ Z^{\alpha\setminus i}) - P(c\circ x^{*\alpha\setminus i}) }}
        \leq (O(\eps')+2\eta)W\,.
    \end{align*}
    Since by construction $\hat{x}$ is feasible, another application of Holder's inequality also yields the following bound on the third term,
    \begin{align*}
        \sum_{\substack{i\in [n]\\ \Delta\textnormal{-wide}}}\sum_{(w,c,\alpha)\in \tilde{S}_i}w \Paren{P(c\circ (\hat{x}_i\cdot Z^{\alpha\setminus i})) - P(c\circ \hat{x}^{\alpha}) }
        \leq (O(\eps')+2\eta)W\,.
    \end{align*}
    For the second term in \cref{eq:csp-expectation}, by construction of $X$ we have $\E \Brac{P(c\circ X^{\alpha})\,|\,\cE}= P(c\circ \hat{x}^{\alpha})$.
    Combining the three bounds, we get that  
    \begin{align*}
        \opt_\cI\geq E \Brac{\left.\frac{1}{W}\sum_{\substack{i\in [n]\\ \Delta\textnormal{-wide}}}\sum_{(w,c,\alpha)\in \tilde{S}_i}w  P(c\circ X^{\alpha}) \right|\cE }\geq \opt_\cI - (O(\eps')+4\eta)\,.
    \end{align*}
    Applying Markov's inequality on the random variable $\opt_\cI - \frac{1}{W}\sum_{\substack{i\in [n]\\ \Delta\textnormal{-wide}}}\sum_{(w,c,\alpha)\in \tilde{S}_i}w  P(c\circ X^{\alpha})$, we get
    \begin{align*}
        \bbP & \Paren{\left.\frac{1}{W}\sum_{\substack{i\in [n]\\ \Delta\textnormal{-wide}}}\sum_{(w,c,\alpha)\in \tilde{S}_i}w  P(c\circ X^{\alpha}) \leq \opt_\cI - (O(\eps')+5\eta) \right|\cE }\leq \frac{1}{1+\eta}
    \end{align*}
    The theorem follows since we sample $O(1/\eta)$ independent assignments $X$ and pick the best.
\end{proof}

\end{document}